%
%
%
%
%
\RequirePackage{fix-cm}
\documentclass[smallextended]{svjour3}       
\smartqed  
\usepackage{graphicx}
%
%
\usepackage{amsmath}
\usepackage{amsfonts}
\usepackage{amssymb}
\usepackage{bbm}
%

\newcommand{\ud}{\mathrm{d}}

%
\begin{document}

\title{The value of power-related options under spectrally negative L\'evy processes
}


\author{Jean-Philippe Aguilar
}


 \institute{J.~Ph. Aguilar  \at
               Cov\'ea Finance - Quantitative Research Team - 8-12 Rue Boissy d'Anglas - FR75008 Paris \\
               \email{jean-philippe.aguilar@covea-finance.fr}          
}

\date{This version: Nov. $19^{\mathrm{th}}$, 2019, revised Sept. $27^{\mathrm{th}}$, 2020.} 

\maketitle

\begin{abstract}
We provide analytical tools for pricing power options with exotic features (capped or log payoffs, gap options etc.) in the framework of exponential L\'evy models driven by one-sided stable or tempered stable processes. Pricing formulas take the form of fast converging series of powers of the log-forward moneyness and of the time-to-maturity; these series are obtained via a factorized integral representation in the Mellin space evaluated by means of residues in $\mathbb{C}$ or $\mathbb{C}^2$. Comparisons with numerical methods and efficiency tests are also discussed.
\keywords{L\'evy Process  \and Stable Distribution \and Tempered Stable Distribution \and Digital option \and Power option \and Gap option \and Log option}
\subclass{60E07 \and 60G51 \and 60G52 \and 62P05 \and 91G20} 
\end{abstract}

\section{Introduction}
\label{intro}

Spectrally negative L\'evy processes are L\'evy processes (see the classical textbook \cite{Bertoin96} for a complete introduction to the theory of L\'evy processes, and, among many other references, \cite{Geman02,Schoutens03,Cont04,Tankov11} for their applications in financial modeling) whose L\'evy measure is supported by the real negative axis, i.e., processes without positive jumps \cite{Kuznetsov12}; they include Brownian motion with drift, asymmetric $\alpha$-stable \cite{Zolotarev86,Carr03} or asymmetric tempered-stable \cite{Carr02,Poirot06} processes and their particular cases, such as negative Gamma and Inverse Gaussian processes. Such one-sided processes have been shown to be effective for modeling the price of financial assets, because their heavy-tail induces a leptokurtosis in the distribution of returns (whose empirical evidence is known since \cite{Fama65}), and their skewed behavior introduces the asymmetry in the occurrence of upward and downward jumps (see \cite{Carr03,Madan08,Eberlein10} for more recent discussions and justifications). Moreover, in the context of exponential market models \cite{Schoutens03,Cont04}, they generate a wide range of dynamics for the log returns, from almost surely continuous trajectories in the Brownian motion case \cite{Black73}, to highly discontinuous realizations with a potentially infinite number of downward jumps on any given time interval.

For the specific purpose of option pricing, spectrally negative L\'evy processes have been introduced in \cite{Carr03} in the case of a totally skewed $\alpha$-stable dynamics, the strong asymmetry of the model combined with the presence of fat tails capturing volatility patterns for longer observable horizons more accurately than Gaussian models. They have subsequently been employed in the calibration of index options on some major equity indices  (it is shown in \cite{Eberlein10} that positive jumps are not needed for long term options on most index markets); concerning path-dependent instruments, the impact of one-sided L\'evy dynamics on Asian and Barrier options has also been investigated \cite{Patie13,Avram02}. Let us also mention that spectrally negative L\'evy processes have been successfully applied in other areas of Quantitative Finance, notably in default modeling and credit exposure \cite{Madan08}, as the default of a firm is often linked to brutal losses in their assets' value.

When it comes to practical evaluation however, things are more complicated under L\'evy dynamics than in the usual Black-Scholes framework; the literature is dominated by numerical (finite difference) schemes for Partial Integro-Differential Equations \cite{Cont05}, by Monte Carlo simulations \cite{Poirot06} or Fourier transforms of option prices \cite{Carr99}. The latter approach is particularly popular, because in most exponential L\'evy models, the characteristic function of the asset's log price is available in a closed and relatively compact form; several refinements of the method have been introduced to accelerate the evaluation of Fourier integrals, notably by means of other integral transforms (among others, Fourier-cosine transform \cite{Fang08} or Hilbert transform \cite{Feng08}) or, more recently, by application of frame duality properties \cite{Kirkby15}. Let us also mention that, in the case of European style options with non standard terminal payoffs, it is also possible to use an integral representation decomposing the contract into a sum of standard contracts (see \cite{Carr01}) and to evaluate numerically the involved integrals, at least for smooth (twice differentiable) payoffs. For discontinuous payoffs, it is advantageous to refine this approach by assembling a series of selected payoffs and by using the theory of frames, as developed in \cite{Kirkby19}.

In this paper, we would like to take profit of the properties of another Fourier-related transform, namely the Mellin transform \cite{Flajolet95}. First, let us mention that the Mellin transform has been previously implemented in many areas of financial modeling, from providing representations for vanilla or basket options in the Black-Scholes model \cite{Panini04}, to quantifying the at-the-money implied volatility slope in various L\'evy models \cite{Gerhold16}. In our approach, we will focus on expressing Mellin integrals as a sum of residues in $\mathbb{C}$ or $\mathbb{C}^2$, so as to obtain simple series expansions for option prices. More precisely, we will show that, in the framework of exponential L\'evy models driven by spectrally negative processes, option prices have a factorized form in the Mellin space (in terms of maturity and log-forward moneyness); inverting the transform, the prices can be conveniently computed by a straightforward series of residues, allowing for a very simple and fast evaluation of the options. 

The Mellin residue technique has been used to derive fast convergent series for European options prices and Greeks, in the Black-Scholes \cite{Aguilar19} and Finite Moment Log Stable (FMLS) \cite{AK19} models; in this article, we will show that the technique successfully applies to a more general range of exotic power-related options (Digital, Log, Gap, European with cap etc.). This family of options offers a higher (and nonlinear) payoff than the vanilla options; it is used e.g. to increase the leverage ratio of strategies, or to lock the exposition to future volatility (see discussion in \cite{Tompkins98}). From a more theoretical point of view, power payoffs have also been employed to determine the L\'evy symbol of the underlying asset dynamics in \cite{Bouzianis19}. In the Gaussian context, closed formulas for pricing and hedging standard power options are known since \cite{Heynen96}, and have been recently generalized to include some barrier features \cite{Ibrahim13}; studies have also been made in the setup of local volatility models, or for more generic polynomial options (decomposed a sum of power options) in \cite{Macovschi06}. The present paper will be devoted to establishing efficient pricing formulas in the context of an asymmetric $\alpha$-stable exponential L\'evy model, and to show that it is possible to extend them to the more generic class of tempered stable processes.

The paper is organised as follows: in section~\ref{sec: Exponential Levy} we start by recalling some basic facts about option pricing in exponential L\'evy models; then, in section~\ref{sec:FMLS}, we establish a factorized form for option prices in the case of a spectrally negative $\alpha$-stable dynamics. This factorized form enables us to derive several pricing formulas for power-related instruments in section~\ref{sec:payoffs}, under the form of fast convergent series of powers of the time-to-maturity and of the moneyness; in this section, we also test the results numerically, and provide efficiency tests. In section~\ref{sec:tempered}, we show that similar formulas can also be derived if the stable distribution is tempered, and study the impact of the tempering parameter in the case of a digital option. Finally, section~\ref{sec:conclusion} is devoted to concluding remarks and perspectives.


\section{Option pricing in exponential L\'evy Models}
\label{sec: Exponential Levy}

\subsection{Model specification}
\label{subsec:Model def}

\paragraph{Notations.} Given a filtered probability space $ (\Omega, \mathcal{F}, \{ \mathcal{F}_t \}_{t \geq 0}, \mathbb{P} ) $, recall that a process $\{ X_t \}_{t \geq 0}$ is a \textit{L\'evy process} \cite{Bertoin96,Kyprianou13} if there exists a triplet $(a,b,\nu)$ such that the \textit{characteristic exponent} $\Psi(k):=-\log \mathbb{E} ^\mathbb{P} [ e^{i k X_1} ] $ of $X_t$ admits the representation
\begin{equation}\label{Levy_exponent}
    \Psi(k) \, = \, iak \, + \, \frac{1}{2}bk^2 \, + \, \int_{\mathbb{R}} \, ( 1 - e^{ikx} + ikx\mathbbm{1}_{ \{ |x| < 1  \} } ) \, \nu (\ud x)
    ,
\end{equation}
where $a,b\in\mathbb{R}$ and $\nu$ is a measure concentrated on $\mathbb{R} \backslash \{ 0 \}$ satisfying 
\begin{equation}
    \int_{\mathbb{R}} \, \mathrm{min} (1,x^2) \, \nu(\ud x) \, < \, \infty
    .
\end{equation}
Equation \eqref{Levy_exponent} is known as the \textit{L\'evy-Khintchine} formula; $a$ is the \textit{drift}, $b$ is the \textit{Brownian} (or \textit{diffusion}) \textit{coefficient} and $\nu$ is the \textit{L\'evy measure} of the process. 

If $\nu(\mathbb{R})<\infty$, one speaks of a process with \textit{finite activity} or \textit{intensity}; this corresponds to processes whose realizations have a finite number of jumps on every finite interval, like in jump-diffusion models such as the Merton model \cite{Merton76} or the Kou model \cite{Kou02}. If $\nu(\mathbb{R}) = \infty$, then one speaks of a process with \textit{infinite activity} or \textit{intensity}, and in this case an infinite number of jumps occur on every finite interval; this gives birth to a very rich dynamics and such processes do not need a Brownian component to generate complex behaviors. When furthermore $\nu(\mathbb{R}_+) = 0$ (resp. $\nu(\mathbb{R}_-) = 0$), the process is said to be \textit{spectrally negative} (resp. \textit{spectrally positive}).

As a L\'evy process has stationary independent increments, its \textit{characteristic function} can be written down as
\begin{equation}
    \mathbf{F} [X_t] (k) \, := \, \mathbb{E}^\mathbb{P} [e^{i k X_t}] \, = \, e^{-t \Psi(k)}
\end{equation}
and its \textit{moment generating function}, whenever it converges, as:
\begin{equation}
    \mathbf{M} [X_t] (p) \, := \, \mathbb{E}^\mathbb{P} [e^{p X_t}] \, = \, e^{t \phi(p)} \,\, , \,\,\,\,\,\,\, \phi(p) \, = \, - \Psi (-ip)
    .
\end{equation}
The function $\phi(p)$ is the \textit{Laplace exponent} or \textit{cumulant generating function} of the process, and its existence depends on the asymptotic behavior of the L\'evy measure; in particular, in the case of a spectrally negative process, the absence of positive fat tail ensures that $\phi(p)$ exists in the whole complex half-plane $\{ Re(p) > 0 \}$.

\paragraph{Exponential processes.} Let us now introduce the class of exponential L\'evy models, following the classical setup of \cite{Schoutens03,Cont04} for instance. Let $T>0$, and let $S_t$ denote the value of a financial asset at time $t\in[0,T]$; we assume that it can be modeled as the realization of a stochastic process $\{ S_t \}_{t\geq 0}$ on the canonical space $\Omega = \mathbb{R}_+$ equipped with its natural filtration, and that, under the \textit{risk-neutral measure} $\mathbb{Q}$, its instantaneous variations can be written down in local form as:
\begin{equation}\label{SDE_exp}
    \frac{\ud S_t}{S_t} \, = \, (r - q) \, \ud t \, + \,  \, \ud X_t
    .
\end{equation}
In the stochastic differential equation \eqref{SDE_exp}, $r\in\mathbb{R}$ is the risk-free interest rate and $q\in\mathbb{R}$ is the dividend yield, both assumed to be deterministic and continuously compounded, and $\{ X_t \}_{t\geq 0}$ is a L\'evy process; for the simplicity of notations, we will assume that $q = 0$, but all the results of the paper remain valid when replacing $r$ by $r-q$.

The solution to \eqref{SDE_exp} is the \textit{exponential L\'evy process} defined by:
\begin{equation}\label{Solution_exp}
    S_T \, = \, S_t e^{(r+\mu)\tau + X_\tau}
\end{equation}
where $\tau:=T-t$ is the horizon (or time-to-maturity), and $\mu$ is the \textit{martingale} (or \textit{convexity}) \textit{adjustment} computed in a way that the discounted stock price is a $\mathbb{Q}$-martingale, which reduces to the condition:
\begin{equation}
    \mathbb{E}_t^{\mathbb{Q}} \left[ e^{\mu\tau + X_\tau} \right] \, = \, 1
    ,
\end{equation}
or, equivalently, in terms of the Laplace exponent:
\begin{equation}\label{mu_def}
    \mu \, = \, - \phi(1) 
    .
\end{equation}

\subsection{Option pricing}
\label{subsec:Option pricing}

Let $N\in\mathbb{N}$ and $\mathcal{P}: \mathbb{R}_+^{1+N} \rightarrow \mathbb{R}$ be a non time-dependent payoff function depending on the terminal price $S_T$ and on some positive parameters $K_n$, $n= 1\dots N$:
\begin{equation}
    \mathcal{P} \, : (S_T,K_1, \dots , K_N) \, \rightarrow \, \mathcal{P} (S_T,K_1, \dots , K_N) \, := \, \mathcal{P} (S_T, \underline{K} )
    .
\end{equation}
The value at time $t$ of an option with maturity $T$ and payoff $\mathcal{P}(S_T,\underline{K})$ is equal to the risk-neutral conditional expectation of the discounted payoff:
\begin{equation}\label{Risk-neutral_1}
    C(S_t,\underline{K},r,\mu,t,T) \, = \, \mathbb{E}_t^{\mathbb{Q}} \left[e^{-r\tau} \mathcal{P} (S_T,\underline{K}) \right]
    .
\end{equation}
In the case where the L\'evy process admits a density $g(x,t)$, then, using \eqref{Solution_exp}, we can re-write \eqref{Risk-neutral_1} by integrating all possible realizations for the terminal payoff over the probability density:
\begin{equation}\label{Risk-neutral_2}
    C(S_t,\underline{K},r,\mu,\tau) \, = \, e^{-r\tau} \, \int\limits_{-\infty}^{+\infty} \, \mathcal{P} \left( S_t e^{(r+\mu)\tau +x} , \underline{K}  \right) g(x,\tau) \, \ud x
    .
\end{equation}
In all the following and to simplify the notations, we will forget the $t$ dependence in the stock price $S_t$.

\section{Spectrally negative $\alpha$-stable process (FMLS process)}
\label{sec:FMLS}

\subsection{L\'evy-stable process}
\label{subsec:Lévy-stable}

A \textit{L\'evy-stable} process \cite{Samorodnitsky94,Zolotarev86} is a L\'evy process whose L\'evy-Khintchine triplet has the form $(a,0,\nu_{stable})$, with:
\begin{equation}\label{nu_stable}
    \nu_{stable} (x) \, = \, \frac{\gamma_-}{|x|^{1+\alpha}}\mathbbm{1}_{\{x<0\}} \, + \, \frac{\gamma_+}{x^{1+\alpha}}\mathbbm{1}_{\{x>0\}}
    ,
\end{equation}
where $\alpha\in(0,2)$ and $\gamma_{\pm}\in\mathbb{R}$. It is known that, introducing $\gamma$ and $\beta$ defined by
\begin{equation}
    \left\{
    \begin{aligned}
    & \gamma^\alpha \, : = \, -(\gamma_+ + \gamma_-) \Gamma(-\alpha)\cos\frac{\pi\alpha}{2} \\
    & \beta \, := \, \frac{\gamma_+ - \gamma_-}{\gamma_+ + \gamma_-}
    ,
    \end{aligned}
    \right.
\end{equation}
then for $\alpha\in (0,1) \cup (1,2]$ the characteristic exponent of the process admits the Feller parametrization:
\begin{equation}\label{Psi_stable}
    \Psi_{stable}(k) \, = \, \gamma^\alpha |k|^\alpha \left( 1-i\beta\tan\frac{\alpha\pi}{2}\mathrm{sgn}k   \right) \, + \, i\eta k
\end{equation}
for some constant $\eta\in\mathbb{R}$ (see, for instance, exercise 1.4 in the textbook \cite{Kyprianou13}). A L\'evy-stable process can therefore be represented as a 4-parameter process $L_(\eta,\gamma^\alpha,\beta$): $\alpha$ controls the behavior of the tails and $\beta\in[-1,1]$ their \textit{asymmetry}, $\gamma$ is a \textit{scale} parameter, and $\eta$ is a \textit{location} parameter. In particular, when $\alpha\in (1,2]$ then it follows from \eqref{Psi_stable} that $\eta$ equals the mean $\mathbb{E}^{\mathbb{Q}}[X_t]$.

It is interesting to note that when $\alpha=2$ and $\eta=0$ then the characteristic function \eqref{Psi_stable} degenerates into the characteristic function of the centered normal distribution:
\begin{equation}
    L(0,\sigma^2,\beta) \, = \, N(0,(\sigma\sqrt{2})^2 ) \hspace*{0.5cm} \forall \beta\in[-1,1]
    ,
\end{equation}
and therefore the Black-Scholes model is a particular case of a L\'evy-stable model for $\alpha=2$.

\subsection{Fully asymmetric process}
\label{subsec:Fully asymmetric}

It follows from the definition of the L\'evy measure \eqref{nu_stable}, that the moment generating function $\mathbf{M} [X_t] (p)$ of a L\'evy-stable process exists if and only if $\gamma_+ =0$, or equivalently $\beta=-1$ that is, in the case of a spectrally negative L\'evy-stable process, because it has only one fat-tail located in the real negative axis; one also speaks of a \textit{fully asymmetric} process, and the condition $\beta=-1$ is known as the \textit{maximal negative asymmetry hypothesis}. In this context, choosing $\eta=0$ (process with zero mean), we have:
\begin{equation}\label{Laplace exponent}
    -\Psi_{stable}(-ip) \, = \, \gamma_-\int\limits_{-\infty}^0 (e^{px}-1) \frac{\ud x}{|x|{1+\alpha}} \, = \, \gamma_- \Gamma(-\alpha)p^\alpha \, = \, -\frac{\gamma^\alpha}{\cos\frac{\pi\alpha}{2}}p^\alpha
\end{equation}
which is valid for $p > 0$. It follows from definition \eqref{mu_def} that the martingale adjustment reads:
\begin{equation}\label{mu_FMLS}
    \mu \, = \, \frac{\gamma^\alpha}{\cos\frac{\pi\alpha}{2}}
    .
\end{equation}

It is in \cite{Carr03} that an exponential L\'evy model \eqref{SDE_exp} for a process $\{X_t\}_{t\geq 0}$ being a spectrally negative L\'evy-stable process $L(0,\sigma^\alpha,-1)$ 
was first introduced for the purpose of option pricing. The authors gave it the name of \textit{Finite Moment Log Stable (FMLS)} process, in reference to the existence of the cumulant generating function in this case. Note that the process has infinite activity, the integral of the stable measure being divergent in $0$.

\subsection{Self-similarity and option pricing}
\label{subsec:FMLS_Option pricing}

We now derive a Mellin-Barnes representation for the density of the FMLS process, and for the corresponding option price that we will denote by $C_\alpha$.

\begin{lemma}
\label{lemma:density}
Let $\sigma>0$, $\alpha\in (1,2]$ and $X_t \sim L(0,\sigma^\alpha,-1)$. 
Then the density $g_\alpha(x,t)$ of the process $\{ X_t \}_{t\geq 0 }$ admits the following Mellin-Barnes representation:
\begin{equation}\label{density_FMLS}
    g_\alpha(x,t) \, = \, \frac{1}{\alpha x } \int\limits_{c_1-i\infty}^{c_1+i\infty} \, \frac{\Gamma(1-s_1)}{\Gamma(1-\frac{s_1}{\alpha})} \, \left( \frac{x}{(-\mu t)^{\frac{1}{\alpha}}}  \right)^{s_1} \, \frac{\ud s_1}{2i\pi}
\end{equation}
where $c_1<1$ and $\mu = \frac{\sigma^\alpha}{\cos\frac{\pi\alpha}{2}}$.
\end{lemma}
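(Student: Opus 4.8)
The plan is to exploit the self-similarity of the stable process to reduce the statement to a one-dimensional Mellin inversion, and then to compute the relevant Mellin transform from the Fourier representation of the density. Since the Laplace exponent obtained in \eqref{Laplace exponent} is $\phi(p)=-\mu p^{\alpha}$ with $\mu<0$ given by \eqref{mu_FMLS}, the rescaled variable $Y:=X_t/(-\mu t)^{1/\alpha}$ has, for $p>0$, the $t$-independent moment generating function $\mathbb{E}^{\mathbb{Q}}[e^{pY}]=e^{p^{\alpha}}$; equivalently, if $h$ denotes the density of $Y$, then $g_\alpha(x,t)=(-\mu t)^{-1/\alpha}h\big(x(-\mu t)^{-1/\alpha}\big)$. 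Substituting this scaling into \eqref{density_FMLS} and setting $y=x(-\mu t)^{-1/\alpha}$, one checks that the claim is equivalent to the reduced representation
\[ h(y)=\frac{1}{\alpha y}\int\limits_{c_1-i\infty}^{c_1+i\infty}\frac{\Gamma(1-s_1)}{\Gamma(1-\frac{s_1}{\alpha})}\,y^{s_1}\,\frac{\ud s_1}{2i\pi},\qquad c_1<1, \]
so it suffices to identify the Mellin transform of $h$. (The inversion reconstructs $h$, hence $g_\alpha$, on the positive half-line, which is the branch entering the subsequent moneyness expansions.)

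To this end I would start from the Fourier inversion of the density. In the fully asymmetric case the characteristic exponent factorises as $\Psi_{stable}(k)=\mu|k|^{\alpha}e^{i\frac{\alpha\pi}{2}\mathrm{sgn}\,k}$, so that after rescaling the characteristic function of $Y$ is $\exp\big(|k|^{\alpha}e^{i\frac{\alpha\pi}{2}\mathrm{sgn}\,k}\big)$, whose exponent has real part $|k|^{\alpha}\cos\frac{\alpha\pi}{2}<0$ for $\alpha\in(1,2)$. Using that $h$ is real I would reduce the inversion to the half-line, $h(y)=\frac{1}{\pi}\mathrm{Re}\int_{0}^{\infty}e^{-iky}\exp\big(|k|^{\alpha}e^{i\frac{\alpha\pi}{2}}\big)\ud k$, and insert the Mellin--Barnes representation of the exponential, $e^{-z}=\frac{1}{2i\pi}\int_{(c)}\Gamma(s)z^{-s}\ud s$ with $c>0$ and the branch fixed so that $|\arg z|<\frac{\pi}{2}$. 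After interchanging the $s$- and $k$-integrations (licit by absolute convergence once the contour sits in the fundamental strip), the inner integral is the elementary Fourier integral of a complex power, $\int_{0}^{\infty}k^{-\alpha s}e^{-iky}\ud k=\Gamma(1-\alpha s)(iy)^{\alpha s-1}$, which yields the factor $\Gamma(1-\alpha s)$, while the expansion of the exponential contributes $\Gamma(s)$. The substitution $s_1=\alpha s$ then turns the power into $y^{s_1-1}$, produces the prefactor $\tfrac1\alpha$ through the Jacobian, and gives $\Gamma(1-s_1)$ in the numerator.

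The main obstacle is the bookkeeping of branch cuts and complex phases: the factor $(iy)^{\alpha s-1}$, the phase carried by the chosen branch of $z^{-s}$, and the two half-line contributions $k\gtrless0$ (recombined through the real part) must be shown to collapse into exactly $\frac{\sin(\pi s)}{\pi}$, up to the elementary constants. Granting this, the reflection formula $\Gamma(s)\Gamma(1-s)=\pi/\sin(\pi s)$ turns $\Gamma(s)\,\frac{\sin\pi s}{\pi}$ into $1/\Gamma(1-s)=1/\Gamma(1-\frac{s_1}{\alpha})$, producing precisely the integrand $\Gamma(1-s_1)/\Gamma(1-\frac{s_1}{\alpha})$ of the reduced representation. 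The position of the contour is dictated by the fundamental strip: the positive tail of $h$ decays super-exponentially, so the Mellin transform extends analytically for all $\mathrm{Re}\,s_1<1$, whereas the behaviour at the origin fixes the right edge at $\mathrm{Re}\,s_1=1$; this is exactly the domain $c_1<1$ appearing in the statement and the one guaranteeing the interchange above. Finally, I would verify the degenerate case $\alpha=2$, in which \eqref{density_FMLS} must reduce, via the Legendre duplication formula, to the Gaussian density associated with $N(0,(\sigma\sqrt2)^2t)$; this fixes all multiplicative constants and confirms the normalisation.
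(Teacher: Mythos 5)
Your route is genuinely different from the paper's, and it does work. The paper never touches the Fourier representation: exploiting spectral negativity, it starts from the bilateral Laplace (Bromwich) inversion
\begin{equation*}
g_\alpha(x,t)\,=\,\frac{1}{2i\pi}\int\limits_{c_p-i\infty}^{c_p+i\infty}e^{-px}\,e^{-\mu t p^{\alpha}}\,\ud p ,
\end{equation*}
which converges absolutely because $\mathrm{Re}\,(p^{\alpha})\sim|\mathrm{Im}\,p|^{\alpha}\cos\frac{\alpha\pi}{2}<0$ along vertical lines; it then computes the Mellin transform $\int_0^\infty g_\alpha(x,t)x^{s-1}\ud x=\frac{1}{\alpha}\Gamma(s)\,(-\mu t)^{\frac{1-s}{\alpha}}/\Gamma\bigl(1+\frac{s-1}{\alpha}\bigr)$ by swapping the $x$- and $p$-integrals, changing variables $p^{\alpha}\to p$ and recognizing the inverse Laplace transform of a power, and concludes by Mellin inversion and $s\to 1-s$. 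You instead rescale by self-similarity (which the paper only remarks on \emph{after} the lemma) and invert the characteristic function, inserting the Mellin--Barnes kernel for the exponential and recombining the half-lines $k\gtrless 0$. I checked your phase bookkeeping and it does close: with $z_{\pm}=-|k|^{\alpha}e^{\pm i\frac{\alpha\pi}{2}}$, the combination of $z_{\pm}^{-s}$ with $(\pm iy)^{\alpha s-1}$ produces $\mp i\,e^{\pm i\pi s}$, the two half-lines sum to $2\sin(\pi s)$, and $\Gamma(s)\sin(\pi s)/\pi=1/\Gamma(1-s)$ yields exactly $\Gamma(1-s_1)/\Gamma\bigl(1-\frac{s_1}{\alpha}\bigr)$ after $s_1=\alpha s$; the derivation lives on $c_1=\alpha c\in(0,1)$ and extends to any $c_1<1$ by Stirling decay and the absence of poles left of $1$. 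What each approach buys: the paper's uses one-sidedness structurally, so every integral is absolutely convergent and no branch cuts appear; yours works directly from the Feller parametrization, makes the self-similar scaling explicit, and comes with a useful $\alpha=2$ normalization check, at the price of delicate phases.

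One justification is wrong as stated and must be repaired: the interchange of the $s$- and $k$-integrations is \emph{not} ``licit by absolute convergence.'' After inserting the Mellin--Barnes kernel, the modulus of the double integrand on $\mathrm{Re}\,s=c$ is proportional to $k^{-\alpha c}$ times a $k$-independent factor, and $\int_0^\infty k^{-\alpha c}\ud k$ diverges for every $c$, precisely because $|e^{-iky}|=1$: the oscillation you discard when taking moduli is what makes the inner integral converge (only conditionally, on $0<\mathrm{Re}\,s<1/\alpha$), so Fubini does not apply directly. Two standard patches work: (i) rotate the $k$-contour onto a ray $k=re^{-i\theta}$ with small $\theta>0$, legitimate since $y>0$ makes $e^{-iky}$ decay there while $\cos\bigl(\frac{\alpha\pi}{2}-\alpha\theta\bigr)$ remains negative, so the rotated double integral is absolutely convergent; or (ii) insert a damping factor $e^{-\epsilon k}$, apply Fubini for $\epsilon>0$ (the inner integral becomes $\Gamma(1-\alpha s)(\epsilon+iy)^{\alpha s-1}$), and let $\epsilon\downarrow 0$ by dominated convergence. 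With either patch your argument is complete and gives the lemma on $c_1\in(0,1)$, hence on all $c_1<1$ by contour shifting.
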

\begin{proof}
Using eqs. \eqref{Laplace exponent} and \eqref{mu_FMLS} and the Laplace inversion formula, we have:
\begin{equation}
    g_\alpha(x,t) \, = \, \frac{1}{2i\pi} \, \int\limits_{c_p-i\infty}^{c_p+i\infty} e^{-px} e^{-\mu t p^\alpha} \, \ud p 
\end{equation}
where $c_p>0$. Taking the Mellin transform and making the change of variables $p^\alpha \rightarrow p$, we have:
\begin{equation}
    g_\alpha^*(s_1,t) \, := \, \int\limits_{0}^\infty g(x,t)  x^{s_1-1} \ud x \, = \,  \frac{1}{\alpha} \Gamma(s) \, \frac{1}{2i\pi} \int\limits_{c_p-i\infty}^{c_p+i\infty} e^{-\mu t p} p^{\frac{1-s}{\alpha}-1} \ud p 
\end{equation} 
for any $s>0$. The remaining $p$-integral is equal to $\frac{1}{\Gamma(1+\frac{s-1}{\alpha})}(-\mu t)^\frac{1-s}{\alpha}$ on the condition that $s>1-\alpha$ (see for instance \cite{Bateman54} or any monograph on Laplace transform); observe that, as $\alpha\in(1,2]$, the two conditions on $s$ reduce to $s>0$. Finally, the integral \eqref{density_FMLS} is obtained by applying the Mellin inversion formula \cite{Flajolet95} and by changing the variable $s\rightarrow 1-s$. \qed
\end{proof}

Equation \eqref{density_FMLS} shows that the density is a function of the ratio $\frac{x}{(-\mu t)^{\frac{1}{\alpha}}}$, which is actually a consequence of the self-similarity property \cite{Embrechts00} of stable processes (a scaling of time is equivalent to an appropriate scaling of space). This property allows for a nice factorization of the option price in the Mellin space; indeed, let us denote
\begin{equation}\label{Mellin_G}
    G_\alpha^*(s_1) \, := \, \frac{1}{\alpha}\frac{\Gamma(1-s_1)}{\Gamma(1-\frac{s_1}{\alpha})}
\end{equation}
and 
\begin{equation}\label{Mellin_K}
    K^*(s_1) \, := \, \int\limits_{-\infty}^{+\infty} \,  \mathcal{P} \left( S e^{(r+\mu)\tau +x} , \underline{K}  \right) x^{s_1-1} \, \ud x
    ,
\end{equation}
and let us assume that the integral \eqref{Mellin_K} converges for $Re(s_1)\in (c_- , c_+)$ for some real numbers $c_- < c_+$. Then, as a direct consequence of the pricing formula \eqref{Risk-neutral_2} and of lemma \ref{lemma:density}, we have:
\begin{proposition}[Factorization in the Mellin space]
\label{proposition:convolution}
Let $c_1\in(\tilde{c}_-,\tilde{c}_+)$ where
$
(\tilde{c}_-,\tilde{c}_+) := (c_-,c_+) \cap (-\infty , -1)
$ is assumed to be nonempty. Then, under the hypothesis of lemma~\ref{lemma:density}, the value at time $t$ of an option with maturity $T$ and payoff $\mathcal{P}(S_T,\underline{K})$ is equal to:
\begin{equation}\label{Risk-neutral:convolution}
    C_\alpha (S,\underline{K},r,\mu,\tau) \, = \, e^{-r\tau}  \int\limits_{c_1-i\infty}^{c_1+i\infty} \, K^*(s_1) G_\alpha^*(s_1) \, \left( -\mu\tau \right)^{-\frac{s_1}{\alpha}} \, \frac{\ud s_1}{2i\pi}
    .
\end{equation}
\end{proposition}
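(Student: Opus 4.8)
The plan is to derive \eqref{Risk-neutral:convolution} by substituting the Mellin--Barnes representation of the density supplied by Lemma~\ref{lemma:density} into the risk-neutral pricing integral \eqref{Risk-neutral_2}, and then interchanging the order of the two integrations so as to read off the factors $K^*$ and $G_\alpha^*$. Rewriting \eqref{density_FMLS} at $t=\tau$ in the form $g_\alpha(x,\tau)=\frac{1}{2i\pi}\int_{c_1-i\infty}^{c_1+i\infty} G_\alpha^*(s_1)\,(-\mu\tau)^{-s_1/\alpha}\,x^{s_1-1}\,\ud s_1$, with $G_\alpha^*$ as in \eqref{Mellin_G}, I would insert it into \eqref{Risk-neutral_2} and formally exchange the $x$- and $s_1$-integrals to obtain
\begin{equation}
C_\alpha \, = \, e^{-r\tau}\int\limits_{c_1-i\infty}^{c_1+i\infty} G_\alpha^*(s_1)\,(-\mu\tau)^{-\frac{s_1}{\alpha}}\left[\,\int\limits_{-\infty}^{+\infty}\mathcal{P}\!\left(Se^{(r+\mu)\tau+x},\underline{K}\right)x^{s_1-1}\,\ud x\right]\frac{\ud s_1}{2i\pi}.
\end{equation}
The bracketed inner integral is precisely the definition \eqref{Mellin_K} of $K^*(s_1)$, so the whole content of the proposition is the legitimacy of this interchange; once it is granted the identification is a one-line substitution.

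The key step, and the main obstacle, is therefore justifying Fubini's theorem for the double integral over $(x,s_1)$, which I would do by exhibiting absolute integrability on the product domain. Convergence in the $x$-variable is exactly the standing assumption that \eqref{Mellin_K} defines $K^*(s_1)$ on the strip $Re(s_1)\in(c_-,c_+)$. Convergence in the $s_1$-variable comes from the decay of the kernel along the contour: Stirling's asymptotics for the ratio of Gamma functions give, up to a polynomial factor in $|Im(s_1)|$, a bound $|G_\alpha^*(s_1)|=O\!\big(e^{-\frac{\pi}{2}(1-\frac{1}{\alpha})|Im(s_1)|}\big)$ as $|Im(s_1)|\to\infty$, and since $\alpha\in(1,2]$ we have $1-\frac{1}{\alpha}>0$, so the kernel decays exponentially on every vertical line. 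Choosing $c_1\in(\tilde c_-,\tilde c_+)=(c_-,c_+)\cap(-\infty,-1)$ places the contour simultaneously inside the convergence strip of $K^*$ and inside the region $c_1<-1$, which is contained in the half-plane $c_1<1$ where the representation \eqref{density_FMLS} is valid; the product of a finite $K^*(s_1)$ with an exponentially small kernel is then absolutely integrable over the contour, and Fubini applies.

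Finally I would note that the hypothesis $(\tilde c_-,\tilde c_+)\neq\emptyset$ is exactly what makes the admissible set of contours nonempty, so that the representation is not vacuous, and that since $\Gamma(1-s_1)$ has its poles only at $s_1=1,2,\dots$ while $1/\Gamma(1-\frac{s_1}{\alpha})$ is entire, the integrand is holomorphic throughout the strip and, by Cauchy's theorem, the value of the integral is independent of the particular $c_1$ chosen there. The one genuinely delicate point worth spelling out is that the Mellin monomial $x^{s_1-1}$ is naturally attached to the positive half-line whereas \eqref{Risk-neutral_2} runs over all of $\mathbb{R}$; I would address this by performing the substitution at the level of the density representation \eqref{density_FMLS} itself, whose $s_1$-integrand already carries the full $x^{s_1}$ dependence, so that after the interchange the real-line integral against $x^{s_1-1}$ reassembles into $K^*(s_1)$ with no extra boundary contributions.
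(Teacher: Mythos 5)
Your overall route --- insert the Mellin--Barnes representation of the density into \eqref{Risk-neutral_2}, interchange the $x$- and $s_1$-integrations, and read off the factors $K^*$ and $G_\alpha^*$ --- is precisely the argument the paper leaves implicit (the proposition is stated there as a ``direct consequence'' of \eqref{Risk-neutral_2} and Lemma~\ref{lemma:density}, with no written proof), and your justification of the interchange is sound: the Stirling estimate does give exponential decay of $G_\alpha^*$ on vertical lines at rate $\tfrac{\pi}{2}(1-\tfrac{1}{\alpha})>0$ for $\alpha\in(1,2]$, and combined with convergence of \eqref{Mellin_K} at $Re(s_1)=c_1$ this yields the absolute integrability needed for Fubini, as well as the contour-independence you note.

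The flaw is in your resolution of the point you yourself flag as delicate. The representation \eqref{density_FMLS} is obtained in Lemma~\ref{lemma:density} by Mellin inversion of a transform taken over $x\in(0,\infty)$ only, so it represents $g_\alpha(x,t)$ \emph{only for $x>0$}; the FMLS density itself is strictly positive on the negative half-line as well (the process has full support on $\mathbb{R}$ for $\alpha\in(1,2)$). Hence ``performing the substitution at the level of the density representation'' across the whole range of \eqref{Risk-neutral_2} inserts a formula that is simply false on $\{x<0\}$, and no bookkeeping about boundary contributions after the interchange can repair that. The correct repair is to notice that the proposition's standing hypothesis already confines everything to the positive half-line: the kernel $x^{s_1-1}$ in \eqref{Mellin_K} is not defined for $x<0$ and non-integer complex $s_1$, so the assumption that \eqref{Mellin_K} converges on a strip implicitly means that $x\mapsto\mathcal{P}\left(Se^{(r+\mu)\tau+x},\underline{K}\right)$ vanishes for $x\le 0$. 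Then \eqref{Risk-neutral_2} is effectively an integral over $(0,\infty)$, where Lemma~\ref{lemma:density} does apply, and your Fubini/Stirling argument goes through verbatim. This is exactly the situation in the paper's applications: for instance, the digital payoff is supported, as a function of $x$, on $(-k_u-\mu\tau,\infty)$, which lies in the positive axis when $-k_u-\mu\tau>0$.
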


The factorized form \eqref{Risk-neutral:convolution} turns out to be a very practical tool for option pricing. Indeed, as an integral along a vertical line in the complex plane, it can be conveniently expressed as a sum of residues associated to the singularities of the integrand. As Gamma functions are involved, we can control the behavior of the integrand when the contour goes to infinity by using the Stirling asymptotic formula for the Gamma function \cite{Abramowitz72}: if $a_k$, $b_k$, $c_j$, $d_j$ are real numbers, if $\delta:=\sum_k a_k - \sum_j c_j$ and if $\delta < 0$ then
\begin{equation}\label{Stirling}
    \left| \frac{\Pi_k \Gamma(a_k s+b_k)}{\Pi_j \Gamma(c_j s+d_j)} \right| 
\overset{|s|\rightarrow \infty}{\longrightarrow} 0
\end{equation}
when $\arg s \, \in \, (-\frac{\pi}{2} , \frac{\pi}{2})$, and the same holds for $\arg s \, \in \, (\frac{\pi}{2} , \frac{3\pi}{2})$ if $\delta>0$. Therefore, by right or left closing the contour of integration in \eqref{Risk-neutral:convolution}, the option price will take the form of a series:
\begin{equation}
    e^{-r\tau} \times \sum \, \left[ \textrm{residues of }K^*(s_1)G_\alpha^*(s_1) \times \textrm{powers of } \left(-\mu\tau\right)^\frac{1}{\alpha} \right]
    .
\end{equation}
The only technical difficulty will in fact lie in the evaluation of $K^*(s_1)$: depending on the payoff's complexity, it can be either computed directly, or via the introduction of a second Mellin complex variable $s_2$. 

\section{Power payoffs in a spectrally negative $\alpha$-stable environment}
\label{sec:payoffs}

In all this section, $\alpha\in (1,2]$, $\sigma>0$, $X_t \sim L(0,\sigma^\alpha,-1)$ and $u>0$; the \textit{log-forward} moneyness is defined to be:
\begin{equation}\label{moneyness}
   k_u \, := \, \log\frac{S}{K^{\frac{1}{u}}} + r\tau
\end{equation}
and we will use the standard notation $X^+:= X \mathbbm{1}_{ \{X>0 \}}$.

\subsection{One complex variable payoffs}
\label{subsec:1-complex}

\paragraph{Digital power options (cash-or-nothing).}
The call's payoff is:
\begin{equation}
    \mathcal{P}^{(C/N)} (S,K) \, := \, \mathbbm{1}_{ \{ S^u - K > 0 \}  }
    .
\end{equation}

\begin{proposition}
\label{prop:C/N}
The value at time $t$ of a digital power cash-or-nothing call option is:
\begin{equation}\label{Call_C/N}
    C_\alpha^{(C/N)} (S,K,r,\mu,\tau) \, = \, \frac{e^{-r\tau}}{\alpha} \, \sum\limits_{n=0}^\infty \, \frac{1}{n!\Gamma\left( 1 - \frac{n}{\alpha} \right)} \, (k_u + \mu\tau)^n (-\mu\tau)^{-\frac{n}{\alpha}}
    .
\end{equation}
\end{proposition}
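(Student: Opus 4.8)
The plan is to feed the payoff into the Mellin factorization of Proposition~\ref{proposition:convolution} and to evaluate the resulting Barnes integral by residues, closing the contour to the right. First I would compute the payoff transform $K^*(s_1)$ from \eqref{Mellin_K}. Writing $S_T=Se^{(r+\mu)\tau+x}$, the exercise event $S_T^u>K$ is, after taking logarithms and using definition~\eqref{moneyness}, equivalent to $x>-(k_u+\mu\tau)$; the payoff, regarded as a function of $x$, is therefore the indicator $\mathbbm{1}_{\{x>-(k_u+\mu\tau)\}}$, so that
\begin{equation}
K^*(s_1) \, = \, \int\limits_{-(k_u+\mu\tau)}^{+\infty} x^{s_1-1}\,\ud x \, = \, -\,\frac{\left(-(k_u+\mu\tau)\right)^{s_1}}{s_1},
\end{equation}
the boundary term at $+\infty$ vanishing exactly on the strip $Re(s_1)<0$. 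This fixes $(c_-,c_+)=(-\infty,0)$ in Proposition~\ref{proposition:convolution}, whence $(\tilde c_-,\tilde c_+)=(-\infty,-1)$ and any $c_1<-1$ is admissible (the computation being literal when $-(k_u+\mu\tau)>0$, the resulting series then extending to all $k_u$ by analytic continuation).

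Next I would insert this $K^*$ together with $G_\alpha^*$ from \eqref{Mellin_G} into the factorized price \eqref{Risk-neutral:convolution} and turn the vertical integral into a series by closing the contour to the right. In the notation of \eqref{Stirling} the gamma quotient $\Gamma(1-s_1)/\Gamma(1-\frac{s_1}{\alpha})$ has $\delta=-1+\frac{1}{\alpha}<0$ for $\alpha\in(1,2]$, so the integrand decays for $\arg s_1\in(-\frac{\pi}{2},\frac{\pi}{2})$ and the arcs drop out. The enclosed singularities are all simple: a pole at $s_1=0$ coming from the factor $1/s_1$ of $K^*$, and the poles of $\Gamma(1-s_1)$ at $s_1=n$, $n\ge 1$ (the entire factor $1/\Gamma(1-\frac{s_1}{\alpha})$ creating no new pole, and merely annihilating a term whenever $1-\frac{n}{\alpha}$ is a non-positive integer).

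Finally I would collect the residues. Using $\mathrm{Res}_{s_1=n}\Gamma(1-s_1)=(-1)^n/(n-1)!$ for $n\ge1$, together with the residue $-1$ of $K^*$ at $s_1=0$, one finds that the residue of the integrand at $s_1=n$ equals $-\frac{(k_u+\mu\tau)^n}{\alpha\,n!\,\Gamma(1-\frac{n}{\alpha})}(-\mu\tau)^{-n/\alpha}$ for every $n\ge0$: the sign $(-1)^n$ of the $\Gamma$-residue cancels the one carried by $\left(-(k_u+\mu\tau)\right)^{n}$, and the surviving overall minus comes from the prefactor of $K^*$. Since right-closing reverses the orientation, the Barnes integral equals minus the sum of these residues, and multiplying by $e^{-r\tau}$ reproduces \eqref{Call_C/N}. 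I expect the genuine obstacle to be the rigorous justification of the contour closing, since the factors $\left(-(k_u+\mu\tau)\right)^{s_1}$ and $(-\mu\tau)^{-s_1/\alpha}$ grow exponentially along horizontal arcs: one must verify that the super-exponential decay of the gamma quotient, quantified by Stirling's formula, dominates them. This same estimate underlies the convergence of the final series, which is of entire (Wright-function) type thanks to the $n!$ in the denominator.
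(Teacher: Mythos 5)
Your proposal is correct and follows essentially the same route as the paper: compute $K^*(s_1)=-(-k_u-\mu\tau)^{s_1}/s_1$, insert it into the factorization of Proposition~\ref{proposition:convolution}, close the contour to the right using the Stirling estimate \eqref{Stirling} with $\delta=\frac{1}{\alpha}-1<0$, and sum the residues at the nonnegative integers. The only (cosmetic) difference is that the paper first absorbs the $1/s_1$ factor via $\Gamma(1-s_1)=-s_1\Gamma(-s_1)$ so that all poles come from a single $\Gamma(-s_1)$, whereas you treat the pole at $s_1=0$ and those of $\Gamma(1-s_1)$ separately — the residues agree, and your explicit handling of the orientation sign and of the analytic continuation in $k_u$ is if anything slightly more careful than the paper's.
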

\begin{proof}
As we can write:
\begin{eqnarray}
    \mathcal{P}^{(C/N)} (Se^{(r+\mu)\tau+x},K) & = &   \mathbbm{1}_{ \{ e^{u (k_u+ \mu\tau +x) } > 1 \} }  \nonumber  \\
   & = & \mathbbm{1}_{ \{ x  > -k_u - \mu\tau \}  } 
   ,
\end{eqnarray}
then, with the notation \eqref{Mellin_K}, the $K^*(s_1)$ function reads:
\begin{equation}\label{Mellin_K_DigitalCN}
K^*(s_1) \, = \, -\frac{(-k_u-\mu\tau)^{s_1}}{s_1} 
\end{equation}
for $s_1<-1$. Using proposition \ref{proposition:convolution} and the functional relation $\Gamma(-s_1)=-s_1\Gamma(1-s_1)$, the option price is:
\begin{equation}\label{Call_C/N_MB}
    C_\alpha^{(C/N)} (S,K,r,\mu,\tau) \, = \, \frac{e^{-r\tau}}{\alpha} \, \int\limits_{c_1-i\infty}^{c_1+i\infty} \, \frac{\Gamma(-s_1)}{\Gamma(1-\frac{s_1}{\alpha})} (-k_u-\mu_\tau)^{s_1} (-\mu\tau)^{-\frac{s_1}{\alpha}} \, \frac{\ud s_1}{2i\pi}
\end{equation}
which converges for $s_1<0$. We can note that:
\begin{equation}
    \delta \, = \, \frac{1}{\alpha} - 1
\end{equation}
is negative because $\alpha>1$, thus, it follows from the Stirling formula \eqref{Stirling} that the analytic continuation of the integrand vanishes at infinity in the right half plane. Therefore, the integral \eqref{Call_C/N_MB} equals the sum of residues at the poles located in this half plane; these poles are induced by the $\Gamma(-s_1)$ term at every positive integer $n$, and the associated residues are:
\begin{equation}
    \frac{(-1)^n}{n!}\frac{1}{\Gamma(1-\frac{n}{\alpha})} (-k_u-\mu\tau)^n (-\mu\tau)^{-\frac{n}{\alpha}}
    .
\end{equation}
Simplifying and summing all residues yields \eqref{Call_C/N}.\qed
\end{proof}

\paragraph{Log power options.}
These options were introduced in \cite{Wilmott06} in the case $u=1$, and are basically options on the rate of return of the underlying asset. The call's payoff is:
\begin{equation}
    \mathcal{P}^{(Log)} (S,K) \, := 
    \left[  \log\left( \frac{S^u}{K} \right)  \right]^+
\end{equation}

\begin{proposition}
\label{prop:Log}
The value at time $t$ of a Log power call option is:
\begin{equation}\label{Call_Log}
    C_\alpha^{(Log)} (S,K,r,\mu,\tau) \, = \, \frac{u e^{-r\tau}}{\alpha} \, \sum\limits_{n=0}^\infty \, \frac{1}{n!\Gamma\left( 1 + \frac{1-n}{\alpha} \right)} \, (k_u + \mu\tau)^n (-\mu\tau)^{\frac{1-n}{\alpha}}
    .
\end{equation}
\end{proposition}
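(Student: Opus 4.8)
The plan is to follow the same route as in Proposition~\ref{prop:C/N}: reduce the payoff to an explicit function of $x$, compute its Mellin transform $K^*(s_1)$ in closed form, feed it into the factorization \eqref{Risk-neutral:convolution}, and finally collapse the resulting Mellin--Barnes integral onto a series of residues. First I would insert the exponential solution \eqref{Solution_exp} into the payoff and use the definition \eqref{moneyness} of the log-forward moneyness to write $\log(S_T^u/K) = u\log S - \log K + u(r+\mu)\tau + ux = u(k_u + \mu\tau + x)$. Since $u>0$, the positive part becomes $\mathcal{P}^{(Log)}(Se^{(r+\mu)\tau+x},K) = u\,(k_u+\mu\tau+x)\,\mathbbm{1}_{\{x > -k_u-\mu\tau\}}$, that is the linear weight of the same digital indicator treated before.

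Next I would evaluate the Mellin transform \eqref{Mellin_K}. Because the indicator localizes the support on $x>-k_u-\mu\tau$, the integral reads $K^*(s_1) = u\int_{-k_u-\mu\tau}^{\infty}(x+k_u+\mu\tau)\,x^{s_1-1}\,\ud x$; splitting it into an $x^{s_1}$ and an $x^{s_1-1}$ piece and integrating by the elementary power rule gives $K^*(s_1) = \frac{u\,(-k_u-\mu\tau)^{s_1+1}}{s_1(s_1+1)}$, now valid on the narrower strip $s_1<-1$ needed for convergence at $+\infty$ (this is precisely the source of the shifted admissibility region $(\tilde c_-,\tilde c_+)\subset(-\infty,-1)$ of Proposition~\ref{proposition:convolution}).

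I would then substitute this $K^*$ together with $G_\alpha^*$ from \eqref{Mellin_G} into the factorization. The key algebraic simplification is to absorb the rational prefactor into the Gamma quotient: using the recurrences $\Gamma(1-s_1)=(-s_1)\Gamma(-s_1)$ and $\Gamma(-s_1)=-(s_1+1)\Gamma(-s_1-1)$, the combination $\frac{\Gamma(1-s_1)}{s_1(s_1+1)}$ collapses exactly to $\Gamma(-s_1-1)$. The price then takes the form $C_\alpha^{(Log)} = \frac{u e^{-r\tau}}{\alpha}\int_{c_1-i\infty}^{c_1+i\infty}\frac{\Gamma(-s_1-1)}{\Gamma(1-\frac{s_1}{\alpha})}(-k_u-\mu\tau)^{s_1+1}(-\mu\tau)^{-\frac{s_1}{\alpha}}\frac{\ud s_1}{2i\pi}$, whose Stirling exponent is again $\delta=\frac{1}{\alpha}-1<0$, so by \eqref{Stirling} the integrand decays in the right half-plane and the contour may be closed to the right.

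Finally I would sum the residues. The poles now come from $\Gamma(-s_1-1)$ and sit at $s_1 = n-1$ for $n=0,1,2,\dots$; note that the lowest one, $s_1=-1$, lies to the right of any admissible $c_1<-1$ and supplies the leading $n=0$ term. Collecting $\res_{s_1=n-1}\Gamma(-s_1-1)=\tfrac{(-1)^{n+1}}{n!}$, evaluating the remaining factors at $s_1=n-1$ (so that $s_1+1=n$ and $1-\tfrac{s_1}{\alpha}=1+\tfrac{1-n}{\alpha}$), and using $(-1)^n(-k_u-\mu\tau)^n=(k_u+\mu\tau)^n$ together with the orientation sign of the right-closed contour, each term reduces to $\frac{1}{n!\,\Gamma(1+\frac{1-n}{\alpha})}(k_u+\mu\tau)^n(-\mu\tau)^{\frac{1-n}{\alpha}}$, which summed over $n$ is exactly \eqref{Call_Log}. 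The only genuinely new feature compared with Proposition~\ref{prop:C/N} is the extra $1/(s_1+1)$ factor: the main care is to check that the Gamma recurrence turns it into a single shifted pole at $s_1=-1$ rather than a spurious double pole, and to keep the sign bookkeeping in the residues straight; everything else is routine.
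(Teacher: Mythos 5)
Your proof is correct and follows essentially the same route as the paper's: the same payoff reduction to $u[k_u+\mu\tau+x]^+$, the same Mellin transform $K^*(s_1)=u\,(-k_u-\mu\tau)^{1+s_1}/(s_1(1+s_1))$ on $\{Re(s_1)<-1\}$, and the same right-closing of the contour justified by $\delta=\frac{1}{\alpha}-1<0$. The only (harmless, arguably cleaner) difference is presentational: you absorb the full rational factor into $\Gamma(-s_1-1)$ so that all poles $s_1=n-1$, $n\geq 0$, are handled uniformly by one residue formula, whereas the paper keeps the integrand as $-\Gamma(-s_1)/\bigl((1+s_1)\Gamma(1-\tfrac{s_1}{\alpha})\bigr)$ and treats the pole at $s_1=-1$ separately from the poles of $\Gamma(-s_1)$ before re-indexing; both bookkeepings yield exactly \eqref{Call_Log}.
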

\begin{proof}
As we can write:
\begin{equation}
    \mathcal{P}^{(Log)} (Se^{(r+\mu)\tau+x},K) \, = \, u \, [ k_u + \mu\tau +x  ]^+
    ,
\end{equation}
then the $K^*(s_1)$ function reads:
\begin{equation}
K^*(s_1) \, = \, u \, \frac{(-k_u-\mu\tau)^{1+s_1}}{s_1(1+s_1)} 
\end{equation}
for $s_1<-1$. Using proposition \ref{proposition:convolution} and the functional relation $\Gamma(1-s_1)=-s_1\Gamma(-s_1)$, the option price is:
\begin{multline}
    C_\alpha^{(Log)} (S,K,r,\mu,\tau)  \, = \\
     \frac{u e^{-r\tau}}{\alpha} \, \int\limits_{c_1-i\infty}^{c_1+i\infty} \, -\frac{\Gamma(-s_1)}{(1+s_1) \Gamma(1-\frac{s_1}{\alpha})} (-k_u-\mu_\tau)^{1+s_1} (-\mu\tau)^{-\frac{s_1}{\alpha}} \, \frac{\ud s_1}{2i\pi}
\end{multline}
which converges for $s_1<-1$. Again, $\delta<0$, and the analytic continuation of the integrand in the right half-plane has:
\begin{itemize}
    \item a simple pole in $s_1=-1$ with residue 
        \begin{equation}\label{Log_series1}
        \frac{(-\mu\tau)^{\frac{1}{\alpha}}}{\Gamma(1+\frac{1}{\alpha})}
        ;
        \end{equation}
    \item a series of poles at every positive integer $s_1=n$ with residues:
    \begin{equation}\label{Log_series2}
        -\frac{(-1)^n}{(n+1)!}\frac{1}{\Gamma(1-\frac{n}{\alpha})} (-k_u-\mu\tau)^{1+n} (-\mu\tau)^{-\frac{n}{\alpha}}
        .
    \end{equation}
\end{itemize}
Summing the residues \eqref{Log_series1} and \eqref{Log_series2} for all $n$ and re-ordering yields \eqref{Call_Log}.\qed
\end{proof}

\paragraph{Capped power options (cash-or-nothing).}
For $K_- < K_+$, the call's payoff is:
\begin{equation}
    \mathcal{P}^{(C/N,cap)} (S,K_+,K_-) \, := \,
    \mathbbm{1}_{ \{ K_- < S^u < K_+ \} } 
    .
\end{equation}
Let us define $k_u^{\pm} \, := \, \log\frac{S}{K_\pm^{\frac{1}{u}}} + r\tau$. We have:

\begin{proposition}
\label{prop:C/N_Cap}
The value at time $t$ of a capped cash-or-nothing call option is:
\begin{multline}\label{Call_C/N_Cap}
    C_\alpha^{(C/N,cap)} (S,K_+,K_-,r,\mu,\tau) \, = 
    \\
    \frac{ e^{-r\tau}}{\alpha} \, \sum\limits_{n=0}^\infty \, \frac{1}{n!\Gamma\left( 1 - \frac{n}{\alpha} \right)} \, \left( (k_u^- + \mu\tau)^n - (k_u^+ + \mu\tau)^n \right) (-\mu\tau)^{-\frac{n}{\alpha}}
    .
\end{multline}
\end{proposition}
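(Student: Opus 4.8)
The plan is to avoid any new contour analysis altogether and instead reduce the statement to the already-proven digital formula by a decomposition argument. First I would observe that, because $K_- < K_+$, the capped indicator splits as a difference of two ordinary cash-or-nothing indicators,
\begin{equation}
\mathbbm{1}_{ \{ K_- < S^u < K_+ \} } \, = \, \mathbbm{1}_{ \{ S^u > K_- \} } \, - \, \mathbbm{1}_{ \{ S^u > K_+ \} }
,
\end{equation}
an identity holding almost surely under $\mathbb{Q}$: the two boundary events $\{ S^u = K_\pm \}$ carry zero probability since, by Lemma~\ref{lemma:density}, the process admits a density $g_\alpha$. Thus, up to a null set, $\mathcal{P}^{(C/N,cap)}$ is the difference of a digital cash-or-nothing call struck at $K_-$ and one struck at $K_+$.

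Next, since the option value \eqref{Risk-neutral_1} is a risk-neutral expectation, hence linear in the payoff, the capped price equals the difference of the two corresponding digital prices. Each of these is delivered verbatim by Proposition~\ref{prop:C/N}, evaluated at the respective log-forward moneyness $k_u^-$ and $k_u^+$ defined just above (recall $k_u^\pm := \log\frac{S}{K_\pm^{1/u}} + r\tau$). Because the two resulting series share the common prefactor $\frac{1}{n!\,\Gamma(1-\frac{n}{\alpha})}$ and the common power $(-\mu\tau)^{-\frac{n}{\alpha}}$, subtracting them term by term collapses the difference into
\begin{equation}
\frac{e^{-r\tau}}{\alpha} \, \sum\limits_{n=0}^\infty \, \frac{1}{n!\,\Gamma\left( 1 - \frac{n}{\alpha} \right)} \, \left( (k_u^- + \mu\tau)^n - (k_u^+ + \mu\tau)^n \right) (-\mu\tau)^{-\frac{n}{\alpha}}
,
\end{equation}
which is precisely \eqref{Call_C/N_Cap}.

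Equivalently, to remain entirely inside the Mellin machinery of Proposition~\ref{proposition:convolution}, I could compute the kernel \eqref{Mellin_K} directly: the substitution $S_T = S e^{(r+\mu)\tau + x}$ turns the capped indicator into $\mathbbm{1}_{ \{ -k_u^- - \mu\tau \, < \, x \, < \, -k_u^+ - \mu\tau \} }$ (the ordering of the bounds being correct since $K_- < K_+$ forces $k_u^- > k_u^+$), whence
\begin{equation}
K^*(s_1) \, = \, \frac{(-k_u^+ - \mu\tau)^{s_1} - (-k_u^- - \mu\tau)^{s_1}}{s_1}
,
\end{equation}
i.e. exactly the difference of the two digital kernels \eqref{Mellin_K_DigitalCN}. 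Right-closing the contour is legitimate for the same reason as in Proposition~\ref{prop:C/N}, namely $\delta = \frac{1}{\alpha} - 1 < 0$, and summing the residues of $\Gamma(-s_1)$ at the positive integers reproduces the identical series after using $(-1)^n(-k_u^\pm - \mu\tau)^n = (k_u^\pm + \mu\tau)^n$. I do not expect any genuine analytical obstacle here: the sole point requiring care is the almost-sure validity of the indicator decomposition, and once the nullity of the boundary sets $\{ S^u = K_\pm \}$ is granted, the result follows immediately from Proposition~\ref{prop:C/N} with no fresh convergence or residue computation beyond what was done there.
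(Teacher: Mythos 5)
Your proposal is correct, and your primary argument takes a genuinely different, more elementary route than the paper's. The paper stays entirely inside the Mellin machinery: it rewrites the capped indicator as \eqref{payoff_cap_identity}, computes the kernel $K^*(s_1) = \frac{(-k_u^+ -\mu\tau)^{s_1} - (-k_u^- -\mu\tau)^{s_1}}{s_1}$ (the difference of two digital kernels \eqref{Mellin_K_DigitalCN}), applies Proposition~\ref{proposition:convolution}, and sums the residues of $\Gamma(-s_1)$ ``like in Proposition~\ref{prop:C/N}'' --- which is exactly your second, ``equivalently'' paragraph, including the sign manipulation $(-1)^n(-k_u^\pm-\mu\tau)^n = (k_u^\pm+\mu\tau)^n$. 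Your main route instead performs the subtraction one level earlier, at the payoff stage: the identity $\mathbbm{1}_{\{K_-<S^u<K_+\}} = \mathbbm{1}_{\{S^u>K_-\}} - \mathbbm{1}_{\{S^u>K_+\}}$ holds off the $\mathbb{Q}$-null boundary set (null because the log-price has a density by Lemma~\ref{lemma:density}), so linearity of the risk-neutral expectation \eqref{Risk-neutral_1} reduces the capped price to a difference of two digital prices, and term-by-term subtraction of the two convergent series \eqref{Call_C/N} --- legitimate precisely because both converge --- yields \eqref{Call_C/N_Cap} with no fresh contour closing, convergence estimate, or residue computation. What each approach buys: yours is shorter, self-evidently robust, and would work verbatim in any model with an atomless terminal law for which the single-strike digital formula is already established; the paper's kernel-level computation keeps the result as a self-contained instance of the factorization \eqref{Risk-neutral:convolution}, and that pattern is the one that carries over directly to the two-complex-variable capped payoff of Proposition~\ref{prop:A/N_cap}, where the stated series is organized around the capped kernel rather than as a difference of two uncapped asset-or-nothing series.
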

\begin{proof}
We can write:
\begin{equation}\label{payoff_cap_identity}
    \mathcal{P}^{(C/N,cap)} (Se^{(r+\mu)\tau+x},K_+,K_-) \, = \, \mathbbm{1}_{ \{ -k_u^- - \mu\tau < x < -k_u^+ - \mu\tau  \} }
\end{equation}
and therefore the $K^*(s_1)$ function reads:
\begin{equation}
K^*(s_1) \, =  \, \frac{(-k_u^+ -\mu\tau)^{s_1} - (-k_u^- -\mu\tau)^{s_1}}{s_1} 
.
\end{equation}
From proposition~\ref{proposition:convolution}, the option price is:
\begin{multline}
    C_\alpha^{(C/N, cap)} (S,K_+,K_-,r,\mu,\tau)  \, = \\
     \frac{ e^{-r\tau}}{\alpha} \, \int\limits_{c_1-i\infty}^{c_1+i\infty} \, -\frac{\Gamma(-s_1)}{ \Gamma(1-\frac{s_1}{\alpha})} 
     ( (-k_u^+ -\mu\tau)^{s_1} - (-k_u^- -\mu\tau)^{s_1} )
     (-\mu\tau)^{-\frac{s_1}{\alpha}} \, \frac{\ud s_1}{2i\pi}
     .
\end{multline}
Like in proposition~\ref{prop:C/N}, summing all the residues associated to the poles of the $\Gamma(-s_1)$ function yields the series \eqref{Call_C/N_Cap}.\qed
\end{proof}

\subsection{Two complex variables payoffs}
\label{subsec:2-complex}

\paragraph{Digital power options (asset-or-nothing).}
The call's payoff is:
\begin{equation}
    \mathcal{P}^{(A/N)} (S,K) \, := \, S^u \mathbbm{1}_{ \{ S^u - K > 0 \}  }
\end{equation}

\begin{proposition}
\label{prop:A/N}
The value at time $t$ of a digital power asset-or-nothing call option is:
\begin{equation}\label{Call_A/N}
    C_\alpha^{(A/N)} (S,K,r,\mu,\tau) \, = \, \frac{K e^{-r\tau}}{\alpha} \, \sum\limits_{\substack{n = 0 \\ m = 0}}^\infty \, \frac{1}{n!\Gamma\left( 1 + \frac{m-n}{\alpha} \right)} \, u^m (k_u + \mu\tau)^n (-\mu\tau)^{\frac{m-n}{\alpha}}
    .
\end{equation}
\end{proposition}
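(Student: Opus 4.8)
The plan is to imitate the one-variable argument of Proposition~\ref{prop:C/N}, the only novelty being the unbounded factor $S^u$. Substituting $S\to Se^{(r+\mu)\tau+x}$ and using the definition \eqref{moneyness} of the log-forward moneyness, I would first rewrite the payoff as
\[
\mathcal{P}^{(A/N)}(Se^{(r+\mu)\tau+x},K) \, = \, K\,e^{u(k_u+\mu\tau+x)}\,\mathbbm{1}_{\{x>-k_u-\mu\tau\}} .
\]
Unlike the cash-or-nothing case, the resulting integrand in \eqref{Mellin_K} grows like $e^{ux}$, so the one-dimensional Mellin transform $K^*(s_1)$ diverges and the direct factorization is unavailable. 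To restore convergence I would represent the exponential itself as a Mellin-Barnes integral in a second variable $s_2$: for a suitable $c_2>0$,
\[
e^{u(k_u+\mu\tau+x)} \, = \, \frac{1}{2i\pi}\int_{c_2-i\infty}^{c_2+i\infty} \Gamma(s_2)\,\bigl(-u(k_u+\mu\tau+x)\bigr)^{-s_2}\,\ud s_2 .
\]

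Rather than invoking Proposition~\ref{proposition:convolution}, which presupposes a convergent $K^*$, I would return to the pricing formula \eqref{Risk-neutral_2}, insert the density of Lemma~\ref{lemma:density} together with the representation above, and perform the $x$-integration first. Writing $b:=-k_u-\mu\tau$, the inner integral is of Beta type,
\[
\int_{b}^{+\infty}(x-b)^{-s_2}\,x^{s_1-1}\,\ud x \, = \, b^{\,s_1-s_2}\,\frac{\Gamma(s_2-s_1)\,\Gamma(1-s_2)}{\Gamma(1-s_1)} ,
\]
convergent in the strip $c_1<c_2<1$. The decisive simplification is that the $\Gamma(1-s_1)$ produced here cancels exactly the $\Gamma(1-s_1)$ carried by the Mellin factor \eqref{Mellin_G}, which leaves the factorized double representation
\[
C_\alpha^{(A/N)} \, = \, \frac{K e^{-r\tau}}{\alpha}\,\frac{1}{(2i\pi)^2} \iint \frac{\Gamma(s_2)\,\Gamma(s_2-s_1)\,\Gamma(1-s_2)}{\Gamma\!\left(1-\frac{s_1}{\alpha}\right)}\,(-u)^{-s_2}\,b^{\,s_1-s_2}\,(-\mu\tau)^{-\frac{s_1}{\alpha}}\,\ud s_1\,\ud s_2 .
\]

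It then remains to evaluate this integral by residues in $\mathbb{C}^2$. The generalized Stirling bound \eqref{Stirling} dictates closing the $s_2$-contour to the left, collecting the poles $s_2=-m$ of $\Gamma(s_2)$, and the $s_1$-contour to the right, collecting the poles $s_1-s_2=n$ of $\Gamma(s_2-s_1)$, for $n,m\geq 0$. At the double pole $s_2=-m$, $s_1=n-m$ the factors collapse neatly: the residue $(-1)^m/m!$ of $\Gamma(s_2)$, the value $\Gamma(1-s_2)=m!$ and $(-u)^{-s_2}=(-1)^m u^m$ combine into a clean $u^m$; the residue of $\Gamma(s_2-s_1)$ supplies $1/n!$; $b^{\,s_1-s_2}=b^{\,n}$ together with the sign yields $(k_u+\mu\tau)^n$; and $\bigl[\Gamma(1-\tfrac{s_1}{\alpha})\bigr]^{-1}(-\mu\tau)^{-\frac{s_1}{\alpha}}$ becomes $\bigl[\Gamma(1+\tfrac{m-n}{\alpha})\bigr]^{-1}(-\mu\tau)^{\frac{m-n}{\alpha}}$. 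Summing over $n,m\geq 0$ reproduces \eqref{Call_A/N}.

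The hard part will be the rigorous justification rather than the bookkeeping. The Mellin-Barnes representation of the exponential converges only through analytic continuation, since its argument $-u(k_u+\mu\tau+x)$ is negative on the support of the payoff; I would therefore need to control the full two-dimensional integrand through \eqref{Stirling}, keep both contours in the admissible strip $c_1<c_2<1$, and legitimize the interchange of the $x$-integration with the two Bromwich contours, using the a posteriori convergence of the double series \eqref{Call_A/N} as the final consistency check.
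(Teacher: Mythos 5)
Your proposal is correct and follows essentially the same route as the paper: rewrite the payoff, represent the exponential factor $e^{u(k_u+\mu\tau+x)}$ as a Mellin--Barnes integral in $s_2$, carry out the Beta-type $x$-integral (the paper's Eq.~\eqref{Mellin_K_DigitalAN}, where the $\Gamma(1-s_1)$ cancellation against $G_\alpha^*$ likewise occurs), and sum residues of $\Gamma(s_2)$ and $\Gamma(s_2-s_1)$ at $(s_1,s_2)=(n-m,-m)$ under the multidimensional Stirling/Jordan justification of \cite{Passare97}. Your only deviations are presentational: you bypass Proposition~\ref{proposition:convolution} on the legitimate grounds that the one-dimensional $K^*(s_1)$ diverges for this payoff (a subtlety the paper glosses over by defining $K^*$ directly as the $s_2$-integral), and you close the contours in the original variables rather than after the paper's change of variables $-s_1+s_2\to U$, $s_2\to V$.
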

\begin{proof}
We can write:
\begin{equation}
    \mathcal{P}^{(A/N)} (Se^{(r+\mu)\tau+x},K) \, = \, Ke^{u(k_u+\mu\tau +x)} \, \mathbbm{1}_{ \{ x  > -k_u - \mu\tau \} }
    .
\end{equation}
Introducing a Mellin-Barnes representation for the exponential term:
\begin{equation}
    e^{u(k_u+\mu\tau +x)} = \int\limits_{c_2-i\infty}^{c_2+i\infty} (-1)^{-s_2} u^{-s_2} \Gamma(s_2) (k_u+\mu\tau +x)^{-s_2} \, \frac{\ud s_2}{2i\pi}
\end{equation}
for $c_2>0$ and integrating over the $x$ variable, the $K^*(s_1)$ function reads:
\begin{multline}\label{Mellin_K_DigitalAN}
K^*(s_1) \, = 
\\
K \int\limits_{c_2-i\infty}^{c_2+i\infty} (-1)^{-s_2} u^{-s_2} \frac{\Gamma(s_2)\Gamma(1-s_2)\Gamma(-s_1+s_2)}{\Gamma(1-s_1)} (-k_u-\mu\tau)^{s_1-s_2} \frac{\ud s_2}{2i\pi}
\end{multline}
and converges for $(s_1,s_2)$ in the triangle $\{Re(s_2)\in(0,1),Re(s_1)<Re(s_2)\}$. From proposition~\ref{proposition:convolution}, the option price is:
\begin{multline}\label{Call_A/N-2}
    C_\alpha^{(A/N)} (S,K,r,\mu,\tau)  \, = \, \frac{K e^{-r\tau}}{\alpha}  \int\limits_{c_1-i\infty}^{c_1+i\infty} 
     \int\limits_{c_2-i\infty}^{c_2+i\infty} \\
     (-1)^{-s_2}  \frac{\Gamma(s_2)\Gamma(1-s_2)\Gamma(-s_1+s_2)}{\Gamma(1-\frac{s_1}{\alpha})} 
     u^{-s_2} (-k_u-\mu\tau)^{s_1-s_2} (-\mu\tau)^{-\frac{s_1}{\alpha}} \, \frac{\ud s_1\ud s_2}{(2i\pi)^2}
     .
\end{multline}
Poles of the integrand occur when $\Gamma(s_2)$ and $\Gamma(-s_1+s_2)$ are singular; performing the change of variables $-s_1+s_2\rightarrow U$, $s_2\rightarrow V$ allows to compute the associated residues, which read:
\begin{equation}
    (-1)^m\frac{(-1)^n}{n!}\frac{(-1)^m}{m!}\frac{\Gamma(1+m)}{\Gamma(1+\frac{m-n}{\alpha})} u^m (-k_u - \mu\tau)^n (-\mu\tau)^{\frac{m-n}{\alpha}}
\end{equation}
Simplifying and summing the residues yields the series \eqref{Call_A/N}. The fact that one can close the $\mathbb{C}^2$ contour in \eqref{Call_A/N-2} is a consequence of the multidimensional generalization of the Stirling estimate \eqref{Stirling} (see \cite{Passare97} or the appendix of \cite{Aguilar19} for details).\qed
\end{proof}

\paragraph{Gap power options.}
A gap option \cite{Tankov10} 
offers a nonzero payoff on the condition that a trigger price is attained at $t=T$. More precisely, the call's payoff is:
\begin{equation}\label{Payoff_Gap}
    \mathcal{P}^{(Gap)} (S,K_1,K_2) \, := \, (S^u-K_1) \mathbbm{1}_{ \{ S^u - K_2 > 0 \}  }
\end{equation}
where $K_1$ is the strike price and $K_2$ the trigger price; if the trigger is lower than the strike then a negative payoff is possible (which would not be the case with a classical knock-in barrier). From the definition of the payoff \eqref{Payoff_Gap}, the value of the gap call option is equal to:
\begin{equation}\label{Call_gap}
    C_\alpha^{(Gap)} (S,K_1,K_2,r,\mu,\tau) \, = \,  C_\alpha^{(A/N)} (S,K_2,r,\mu,\tau) \, - \, K_1 C_\alpha^{(C/N)} (S,K_2,r,\mu,\tau)
    .
\end{equation}

\paragraph{European power options.} The classical European power option is a gap power option with equal strike and trigger prices ($K_1=K_2=K$); the payoff therefore reads
\begin{equation}
    \mathcal{P}^{(E)} (S,K) \, := \, [S^u - K]^+ 
    .
\end{equation}
Observing that \eqref{Call_C/N} is actually a particular case of \eqref{Call_A/N} for $m=0$, it follows immediately from \eqref{Call_gap} that the value of the European power call is:
\begin{equation}\label{Call_European}
    C_\alpha^{(E)} (S,K,r,\mu,\tau) \, = \, \frac{K e^{-r\tau}}{\alpha} \, \sum\limits_{\substack{n = 0 \\ m = 1}}^\infty \, \frac{1}{n!\Gamma\left( 1 + \frac{m-n}{\alpha} \right)} \, u^m (k_u + \mu\tau)^n (-\mu\tau)^{\frac{m-n}{\alpha}}
    .
\end{equation}
When the asset is at-the-money (ATM) forward, that is when $S=K^{\frac{1}{u}}e^{-rt}$, or, equivalently, $k_u=0$, then \eqref{Call_European} becomes:
\begin{multline}\label{Call_European_ATM}
    C_\alpha^{(E,ATM)} (S,K,r,\mu,\tau) \, = \, \\
    \frac{K e^{-r\tau}}{\alpha} \, \left[ u\frac{(-\mu\tau)^{\frac{1}{\alpha}}}{\Gamma(1+\frac{1}{\alpha})}  - u(-\mu\tau) + u^2 \frac{(-\mu\tau)^{\frac{2}{\alpha}}}{\Gamma(1+\frac{2}{\alpha})} \, + \, O \left( u^2 (-\mu\tau)^{1+\frac{1}{\alpha}} \right) \right]
    .
\end{multline}
In particular, if we choose $\alpha=2$ and the normalization $\gamma=\frac{\sigma}{\sqrt{2}}$ in the definition of the martingale adjustment \eqref{mu_FMLS}, then \eqref{Call_European_ATM} reads:
\begin{eqnarray}
    & C_2^{(E,ATM)} (S,K,r,\sigma,\tau) & = \frac{K e^{-r\tau}}{2} \, \left[ 2u\frac{\sigma\sqrt{\tau}}{\sqrt{2\pi}} -u(1-u)\frac{\sigma^2}{2}\tau \, + \,  O \left( u^2 (\sigma\sqrt{\tau})^{3} \right) \right] \nonumber 
    \\
    & & \underset{u\rightarrow 1}{=} \frac{1}{\sqrt{2\pi}} S \sigma\sqrt{\tau} \, + \, O \left( (\sigma\sqrt{\tau})^{3} \right) 
\end{eqnarray}
which is the well-known approximation for the ATM Black-Scholes call.

\paragraph{Capped power options (asset-or-nothing, European).}
For $K_- < K_+$, the payoff of a capped power asset-or-nothing call is:
\begin{equation}
    \mathcal{P}^{(A/N,cap)} (S,K_+,K_-) \, := \,
    S^u \, \mathbbm{1}_{ \{ K_- < S^u < K_+ \} } 
    .
\end{equation}
The presence of a cap allows the seller to protect themselves against the eventuality of enormous payoffs; using the identity \eqref{payoff_cap_identity} for the indicator function, and proceeding in a similar way than for proving proposition~\ref{prop:A/N}, we obtain:
\begin{proposition}
\label{prop:A/N_cap}
The value at time $t$ of a capped power asset-or-nothing call option is:
\begin{multline}\label{Call_A/N_cap}
    C_\alpha^{(A/N, cap)} (S,K_+,K_-,r,\mu,\tau)  \, = \,  \frac{ e^{-r\tau}}{\alpha} S^u e^{u(r+\mu)\tau} \, \times
    \\
    \sum\limits_{\substack{n=0 \\ m=0}}^\infty \, \frac{(-u)^m \left( (k_u^- + \mu\tau)^{1+n+m} - (k_u^+ + \mu\tau)^{1+n+m} \right)}{(1+n+m) n!m! \Gamma\left( 1 - \frac{1+n}{\alpha} \right)} \, (-\mu\tau)^{-\frac{1+n}{\alpha}}
    .
\end{multline}
\end{proposition}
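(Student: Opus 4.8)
The plan is to follow the blueprint of proposition~\ref{prop:A/N} almost verbatim, the only structural novelty being that the half-line of integration is replaced by the bounded interval supplied by the cap identity \eqref{payoff_cap_identity}. First I would insert the exponential solution $S_T = Se^{(r+\mu)\tau+x}$ into the payoff and use \eqref{payoff_cap_identity} to write
\begin{equation}
\mathcal{P}^{(A/N,cap)}(Se^{(r+\mu)\tau+x},K_+,K_-) = S^u e^{u(r+\mu)\tau}\, e^{ux}\, \mathbbm{1}_{\{-k_u^- - \mu\tau < x < -k_u^+ - \mu\tau\}},
\end{equation}
so that the constant $S^u e^{u(r+\mu)\tau}$ factors out and only $e^{ux}$ remains under the integral. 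As in proposition~\ref{prop:A/N}, I would then represent this exponential by its Mellin--Barnes integral, introducing a second complex variable $s_2$ through $e^{ux} = \int_{c_2-i\infty}^{c_2+i\infty}(-1)^{-s_2}u^{-s_2}\Gamma(s_2)\,x^{-s_2}\,\frac{\ud s_2}{2i\pi}$ for $c_2>0$.

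With this representation in hand, the function $K^*(s_1)$ of \eqref{Mellin_K} reduces, after interchanging the $x$- and $s_2$-integrations, to the elementary antiderivative of $x^{s_1-s_2-1}$ over the finite interval, namely
\begin{equation}
K^*(s_1) = S^u e^{u(r+\mu)\tau}\int_{c_2-i\infty}^{c_2+i\infty}\frac{(-1)^{-s_2}u^{-s_2}\Gamma(s_2)}{s_1-s_2}\Big[(-k_u^+-\mu\tau)^{s_1-s_2}-(-k_u^--\mu\tau)^{s_1-s_2}\Big]\frac{\ud s_2}{2i\pi}.
\end{equation}
The crucial difference with proposition~\ref{prop:A/N} is that the finite interval produces a bare simple pole $1/(s_1-s_2)$ rather than a Beta-function factor, and in particular no $1/\Gamma(1-s_1)$ appears; consequently the $\Gamma(1-s_1)$ coming from $G_\alpha^*(s_1)$ in \eqref{Mellin_G} is no longer cancelled and will itself be the source of the $s_1$-poles, shifted to $s_1 = 1+n$ (hence the $1+n$ appearing everywhere in the statement).

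Inserting this $K^*$ into the factorization \eqref{Risk-neutral:convolution} of proposition~\ref{proposition:convolution} yields a double Mellin--Barnes integral in $(s_1,s_2)$, which I would evaluate by residues as in proposition~\ref{prop:A/N}. Closing the $s_2$-contour on the poles of $\Gamma(s_2)$ at $s_2=-m$ (residue $(-1)^m/m!$) generates the $m$-summation and, together with $(-1)^{-s_2}u^{-s_2}\to u^m$, the factor $u^m/m!$; closing the $s_1$-contour to the right --- legitimate since here $\delta=1/\alpha-1<0$, exactly as in \eqref{Stirling} --- on the poles of $\Gamma(1-s_1)$ at $s_1=1+n$ produces the $n$-summation together with $(-\mu\tau)^{-(1+n)/\alpha}/\Gamma(1-\frac{1+n}{\alpha})$ and the factor $1/(1+n+m)$ obtained by evaluating $1/(s_1-s_2)$ at $s_1-s_2=1+n+m$. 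The legitimacy of the $\mathbb{C}^2$ contour closing rests on the multidimensional Stirling estimate already invoked for proposition~\ref{prop:A/N} (see \cite{Passare97,Aguilar19}).

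The step I expect to demand the most care is the sign bookkeeping, which is precisely what converts the naive $u^m$ into the $(-u)^m$ of \eqref{Call_A/N_cap}. Rewriting $(-k_u^\pm-\mu\tau)^{1+n+m}=(-1)^{1+n+m}(k_u^\pm+\mu\tau)^{1+n+m}$ at the poles introduces a factor $(-1)^{1+n+m}$; combining this with the factor $(-1)^n/n!$ contributed by $\Gamma(1-s_1)$ at $s_1=1+n$ (after accounting for the rightward, clockwise closing) and with the reordering of the bracket into $(k_u^-+\mu\tau)^{1+n+m}-(k_u^++\mu\tau)^{1+n+m}$, the powers of $-1$ collapse to $(-1)^m$, so that the $n$- and $m$-dependent signs assemble exactly into $(-u)^m/(n!\,m!)$ and reproduce the stated series. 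Beyond signs, the only genuine technical points are the usual ones: justifying Fubini for the interchange of the $x$- and $s_2$-integrations on the interval (where $e^{ux}$ and its Mellin representation are controlled), and checking that the abscissae $c_1,c_2$ can be placed so as to separate the retained poles $s_1=1+n$ and $s_2=-m$ from the spurious pole at $s_1=s_2$, which lies on the wrong side of the contour and is therefore not collected.
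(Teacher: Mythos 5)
Your derivation is correct and is essentially the paper's own route: the paper proves Proposition~\ref{prop:A/N_cap} in a single sentence (use the cap identity \eqref{payoff_cap_identity} and proceed as in Proposition~\ref{prop:A/N}), and your fleshed-out version --- factor out $S^u e^{u(r+\mu)\tau}$, write $e^{ux}$ as a Mellin--Barnes integral in $s_2$, integrate $x^{s_1-s_2-1}$ over the finite interval, and sum residues at $(s_1,s_2)=(1+n,-m)$ --- reproduces \eqref{Call_A/N_cap} exactly, including the factor $1/(1+n+m)$ and the sign collapse to $(-u)^m$, which I have checked.

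One justification, however, needs to be corrected, even though it does not affect the result. You dismiss the singularity at $s_1=s_2$ on the grounds that it ``lies on the wrong side of the contour''. It does not: with $c_1<-1$ and $c_2>0$, the set $\{s_1=s_2\}$ lies to the \emph{right} of the $s_1$-contour (and to the left of the $s_2$-contour), i.e.\ precisely in the region swept when you close either contour, so placement of the abscissae cannot be the reason it is discarded. The correct reason is that for the combined integrand the singularity is removable: the bracket $(-k_u^+-\mu\tau)^{s_1-s_2}-(-k_u^--\mu\tau)^{s_1-s_2}$ has a simple zero on $\{s_1=s_2\}$, cancelling the simple pole $1/(s_1-s_2)$, so this divisor carries no residue at all. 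Note that this cancellation only holds if you keep the two power terms together: if you split the bracket and treat the two Mellin--Barnes integrals separately, each piece has a genuine pole on $\{s_1=s_2\}$ with residue $\pm(-1)^{-s_2}u^{-s_2}\,\Gamma(s_2)\Gamma(1-s_2)\,(-\mu\tau)^{-s_2/\alpha}/\Gamma(1-\tfrac{s_2}{\alpha})$, and these contributions cancel only in the difference; ignoring them term by term, as your stated argument would suggest, would give a wrong value for each piece.
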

The value of the capped European power option is easily deduced from the values of the capped cash-or-nothing \eqref{Call_C/N_Cap} and asset-or-nothing \eqref{Call_A/N_cap} options:
\begin{multline}\label{Call_European_cap}
    C_\alpha^{(E/N, cap)} (S,K_+,K_-,r,\mu,\tau)  \, = \\
     C_\alpha^{(A/N, cap)} (S,K_+,K_-,r,\mu,\tau) \, - \, K_- C_\alpha^{(C/N, cap)} (S,K_+,K_-,r,\mu,\tau)
     .
\end{multline}
When $K_+ \rightarrow \infty$, the value of the capped option \eqref{Call_European_cap} coincides with the classical uncapped option \eqref{Call_European}; this situation is displayed in figure~\ref{fig:Cap}. We can observe that the convergence to the uncapped price is quicker when $\alpha$ decreases, which is no surprise given the overall $\frac{1}{\alpha}$ factor.
 \begin{figure}[h]
 \centering
  \includegraphics[scale=0.25]{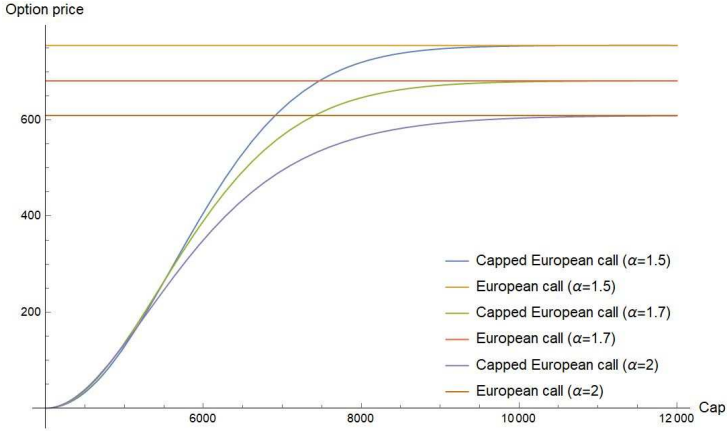}
 \caption{Convergence of capped European call to the uncapped price when the cap $K_+$ goes to infinity, for different tail index parameters $\alpha$; when $\alpha$ decreases, the European (uncapped) price grows higher, given the presence of a left fat tail as soon as $\alpha < 2$. Parameters: strike  $K_-=4000$ and horizon $\tau=2Y$; market parameters are set to $S=4200$, $r=1\%$ and $\sigma=1\%$.}
 \label{fig:Cap}      
 \end{figure}

\subsection{Numerical tests}
\label{subsec:Numerical_tests}
In this subsection, we benchmark the pricing formulas established in the previous sections by comparing them with the formulas available in the cases $\alpha=2$, $u=1$ (i.e., in the Black-Scholes setup); we also provide comparisons with numerical evaluation of Fourier integrals when $\alpha\neq 2$. Except otherwise stated, we choose $r=1\%$, $\sigma=20\%$, $K=4000$, $\tau= 2$ years and we make the normalization $\gamma=\frac{\sigma}{\sqrt{2}}$ in the martingale adjustment \eqref{mu_FMLS}, so as to recover the Black-Scholes adjustment $-\frac{\sigma^2}{2}$ when $\alpha=2$.

\paragraph{Log options}
When $\alpha=2$ and $u=1$, a closed pricing formula exists for the Log option \cite{Haug07}:
\begin{equation}\label{Log_BS}
    C_2^{(Log)} (S,K,r,\sigma,\tau) \, = \, e^{-r\tau} \sigma\sqrt{\tau}
    \left[  n(d_2)  + d_2 N(d_2) \right] 
    \,\, , \,\,\,\,\, d2:=\frac{k-\frac{\sigma^2}{2}\tau}{\sigma\sqrt{\tau}}
    ,
\end{equation}
where $k:=k_1$, $n(x)=\frac{1}{\sqrt{2 \pi}}e^{-\frac{x^2}{2}}$ is the Gaussian density and $N(x)$ is the Normal cumulative distribution function. In table~\ref{tab:Log}, we compare this formula to various truncations of the series \eqref{Call_Log} for $\alpha=2$ and $u=1$, in several market situations (out-of-the money, at-the-money and in-the-money).

 \begin{table}[h]
 \caption{Log call ($\mathcal{P}^{(Log)} (S,K) = 
  \left[ \log S^u - \log K \right]^+$): comparisons between the series \eqref{Call_Log} truncated at $n=n_{max}$ and the closed formula \eqref{Log_BS} in the case $\alpha=2$, $u=1$. We observe that very few terms are needed to obtain an excellent degree of precision, even in deeply out or in the money situations.}
 \label{tab:Log}       
 \centering
 \begin{tabular}{ccccc}
 \hline\noalign{\smallskip}
 & $n_{max}=3$ & $n_{max}=5$ & $n_{max}=10$ & Formula \eqref{Log_BS}  \\
 \noalign{\smallskip}\hline\noalign{\smallskip}
$S=5000$  & 0.238691 & 0.237465 & 0.237525 & 0.237525  \\
$S=4200$ & 0.125287 & 0.125286 & 0.125286 & 0.125286  \\
ATM & 0.092106 & 0.092104 & 0.092104 & 0.092104  \\
$S=3800$ & 0.079177 & 0.079158 & 0.079158& 0.079158  \\
$S=3000$ & 0.025250 & 0.018797& 0.019488 & 0.019487  \\
 \noalign{\smallskip}\hline
\end{tabular}
 \end{table}

\paragraph{Power options ($\alpha=2$)}
For $u>0$, recall the formula by Heynen and Kat \cite{Heynen96} for European power options in the Black-Scholes setup:
\begin{equation}\label{Heynen}
    C_2^{(E)} (S,K,r,\sigma,\tau) \, = \, S^u e^{(u-1)(r+u\frac{\sigma^2}{2})\tau}N(d_1) - Ke^{-r\tau}N(d_2)
    ,
\end{equation}
where 
\begin{equation}
    d_1 := \frac{k_u + (u-\frac{1}{2})\sigma^2\tau}{\sigma\sqrt{\tau}} \,\, , \,\,\,\,\, d_2 \, := \, d_1 - u\sigma\sqrt{\tau}
    .
\end{equation}
In table~\ref{tab:Power}, values obtained with formula \eqref{Heynen} are compared to various truncations of the series \eqref{Call_European}, for various powers $u>0$ and in the ATM situation. The convergence is very fast; of course if one is far from the money, the convergence becomes slightly slower because the moneyness $k_u$ grows when $u\neq 1$ (for instance, if $S=4500$, $k_1=0.14$ but $k_{1.5}=2.90$ and $k_3=5.67$), and therefore the powers $(k_\mu+\tau)^n$ in the numerator are less quickly neutralized by the factorial/Gamma terms of the denominator.

 \begin{table}[h]
 \caption{European power call ($\mathcal{P}^{(E)}(S,K)=[S^u-K]^+$): comparisons between the series \eqref{Call_European} for the  truncated at $n_{max}=m_{max}:=max$, and the values obtained via the formula \eqref{Heynen}, for $\alpha=2$ and various positive powers $u$.}
 \label{tab:Power}       
 \centering
 \begin{tabular}{lccccc}
 \hline\noalign{\smallskip}
 & ${max}= 3 $ & ${max}=5 $ & ${max}=10$ &  Heynen \& Kat \eqref{Log_BS}  \\
 \noalign{\smallskip}\hline\noalign{\smallskip}
$u=1$     & 439.65  &  440.93  &  440.94   & 440.94  \\
$u=1.5$   & 723.00 &  729.86 &  730.06     & 730.06 \\
$u=2$     & 1057.71  & 1080.49  & 1081.64  & 1081.64 \\
$u=3$     & 1908.17  & 2034.41  & 2049.37  & 2049.39 \\
 \noalign{\smallskip}\hline
\end{tabular}
 \end{table}

\paragraph{European options ($\alpha\neq 2$) }

As a consequence of the Gil-Pelaez inversion formula for the characteristic functions, the price of an European call can be decomposed into a sum of Arrow-Debreu securities of the form (see details e.g. in \cite{Bakshi00}):
\begin{equation}\label{Gil_Pelaez}
    C_\alpha^{(E)}(S,K,r,\mu,\tau) \, = \, S\Pi_1 \, - \, Ke^{-r\tau}\Pi_2
    .
\end{equation}
The price of each security can be expressed in terms of the stable characteristic function and of the log-forward moneyness \cite{Lewis01}:
\begin{equation}\label{Pi1}
    \Pi_1 \, = \, \frac{1}{2} \,+ \, \frac{1}{\pi}\int\limits_0^{\infty} \, Re \, \left[ \frac{e^{i u k} \Phi(u-i,\tau)}{iu} \right] \ud u
\end{equation}
and 
\begin{equation}\label{Pi2}
    \Pi_2 \, = \, \frac{1}{2} \,+ \, \frac{1}{\pi}\int\limits_0^{\infty} \, Re \, \left[ \frac{e^{i u k} \Phi(u,\tau)}{iu} \right] \ud u
    ,
\end{equation}
where $\Phi(u,t)$ is the risk-neutral characteristic function
\begin{equation}
    \Phi(u,t) \, := \, e^{i \mu u t}e^{-t \Psi_{stable}(u)}  
\end{equation} 
satisfying the martingale condition $\Phi(-i,t) = 1$. Given the simple form of the stable characteristic exponent \eqref{Psi_stable}, the integrals in \eqref{Pi1} and \eqref{Pi2} can be carried out very easily via a classical recursive algorithm on the truncated integration region (typically,  $u\in[0,1000]$ is sufficient for a precision goal of $10^{-8}$). In table~\ref{tab:European}, we compare the values obtained with this method with several truncations of the series \eqref{Call_European}, for a tail-index $\alpha=1.7$.

\begin{table}[h]
 \caption{European call ($\mathcal{P}^{(E)}(S,K)=[S^u-K]^+$): comparisons between the series \eqref{Call_European} truncated at $n_{max}=m_{max}:=max$, and the values obtained by the Gil-Pelaez method \eqref{Gil_Pelaez}, in the case $\alpha=1.7$, $u=1$. The convergence is very fast, in particular for ITM long term options.}
 \label{tab:European}       
 \centering
 \begin{tabular}{cccccc}
 \hline\noalign{\smallskip}
 & ${max}=3$ & ${max}=10$ & ${max}=20$ & ${max}=30$ & Gil-Pelaez \eqref{Gil_Pelaez}  \\
 \noalign{\smallskip}\hline\noalign{\smallskip}
 \multicolumn{2}{l}{{\bfseries Long term options} ($\tau=2$)} & & & \\
$S=5000$  & 1302.92 & 1309.86 & 1309.86 & 1309.86 & 1309.86   \\
$S=4200$  & 679.32 & 681.56 & 681.56 & 681.56 & 681.56  \\
ATM       & 496.87 & 498.07 & 498.07 & 498.07  & 498.07  \\
$S=3800$  & 425.76 & 426.44 & 426.44 & 426.44 & 426.44 \\
$S=3000$  & 128.50 & 92.46 & 96.50 & 96.50 & 96.50 \\
  \noalign{\smallskip}\hline\noalign{\smallskip}
 \multicolumn{2}{l}{{\bfseries Short term options} ($\tau=0.5$)} & & & \\
$S=5000$  & 1089.70  & 1075.64  & 1075.63 & 1075.63 & 1075.63 \\
$S=4200$  & 383.17 & 383.30 & 383.30 & 383.30 & 383.30 \\
ATM       & 230.47 & 203.49 & 203.49 & 203.49 & 203.49 \\
$S=3800$  & 143.53 & 143.09 & 143.09 & 143.09 & 143.09 \\
$S=3000$  & 211.44 & -27.24 & 1.04 & 1.39 & 1.39 \\
 \noalign{\smallskip}\hline
\end{tabular}
\end{table}

Like before, the convergence is very fast, and goes even faster in the ITM region; this is because the log-forward moneyness \eqref{moneyness} is positive in this zone, and therefore, as $\mu<0$, $(k_\mu+\mu\tau)$ is closer to $0$ than in the OTM zone, which accelerates the convergence of the series \eqref{Call_European}. This situation is displayed in figure~\ref{fig:European_Convergence}.

 \begin{figure}
  \includegraphics[scale=0.26]{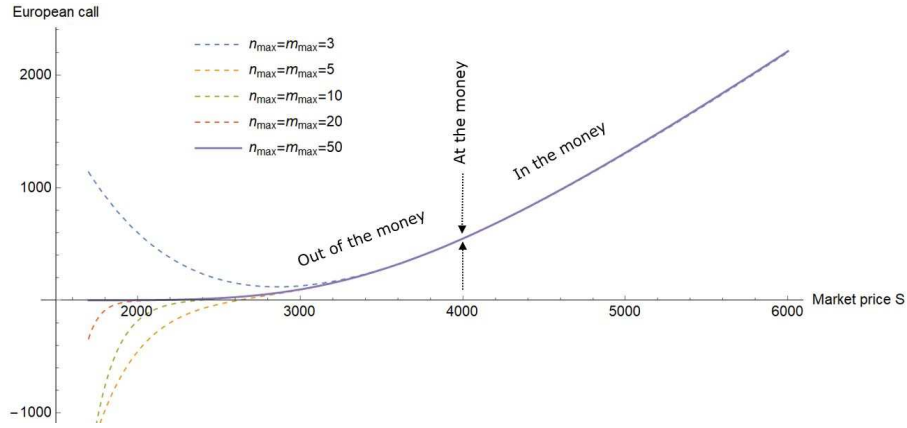}
 \caption{Convergence of partial sums of the series \eqref{Call_European} to the option price ($\alpha=1.7$); in a wide interval of prices around the money ($S\in (3000,6000)$), it is sufficient to consider only the terms up to $n_{max}=m_{max}=5$ to obtain an excellent level of precision.}
 \label{fig:European_Convergence}       
 \end{figure}
%

%

\section{Extension to one-sided tempered stable processes}
\label{sec:tempered}

Tempered stable L\'evy processes, which are known in Physics as \textit{truncated L\'evy flights}, combine $\alpha$-stable and Gaussian trends, and are an alternative solution to achieve finite moments (see details and further references in \cite{Rosinski07}). Their L\'evy-Khintchine triplet has the form $(a,0,\nu_{TS})$ where
\begin{equation}\label{nu_Temperedstable}
    \nu_{TS} (x) \, = \, \frac{\gamma_- \, e^{-\lambda_- |x|}}{|x|^{1+\alpha}}\mathbbm{1}_{\{x<0\}} \, + \, \frac{\gamma_+ \, e^{-\lambda_+ x}}{x^{1+\alpha}}\mathbbm{1}_{\{x>0\}}
\end{equation}
for $\gamma_\pm$, $\lambda_\pm \geq 0$ and $0 < \alpha_\pm < 2$. When $\gamma_-=\gamma_+$ and $\alpha_- = \alpha_+$, we recover the \textit{CGMY process} \cite{Carr02} (sometimes named \textit{classical tempered stable process}) and when furthermore $\alpha_- = \alpha_+ = 0$, the \textit{Variance Gamma process} \cite{Madan98}. In the case where $\lambda_\pm =0$, there is no more tempering and the process is simply a L\'evy-stable process like in section~\ref{sec:FMLS}. When $\alpha_\pm\in (0,1)\cup(1,2)$, the Laplace exponent of the tempered stable process can be easily computed: for $p\in (-\lambda_-,\lambda_+)$ one has
\begin{equation}\label{Laplace_exponent_TS}
    \phi(p) \, = \, \eta p \, + \, \gamma_- \Gamma(-\alpha_-) ( -\lambda_-^{\alpha_-} + (\lambda_- + p)^{\alpha_-} ) \, + \,  \gamma_+ \Gamma(-\alpha_+) ( -\lambda_+^{\alpha_+} + (\lambda_+ - p)^{\alpha_+} )
\end{equation}
where $\eta$ is a constant depending on the drift $a$ and the choice of truncation function for the characteristic function of the process; without loss of generality we choose it to be equal to $0$.

\subsection{Tempered stable densities}
\label{subsec:tempered_densities}

Let us denote by $\nu_{TS}^-(x)$ (resp. $\nu_{TS}^+(x)$) the negative (resp. positive) part of the L\'evy measure \eqref{nu_Temperedstable}, and by $TS^{\pm}(\gamma_\pm,\lambda_\pm,\alpha_\pm)$ the associated one-sided tempered stable processes.

\begin{lemma}
\label{lemma:density_TS}
Let $\alpha_\pm\in(1,2)$ and $\mu_\pm:=-\gamma_\pm\Gamma(-\alpha_\pm)$.
\begin{itemize}
    \item[(i)] If $X_t\sim TS^-(\gamma_-,\lambda_-,\alpha_-)$, then its density $g^-(x,t)$ admits the Mellin-Barnes representation:
    \begin{equation}\label{density_TS-}
        g^-(x,t) =  \frac{e^{ \lambda_-^{\alpha_-} \mu_- t \, + \, \lambda_- x}}{\alpha_- x } \int\limits_{c_1-i\infty}^{c_1+i\infty} \, \frac{\Gamma(1-s_1)}{\Gamma(1-\frac{s_1}{\alpha_-})} \, \left( \frac{x}{(-\mu_-t)^{\frac{1}{\alpha_-}}}  \right)^{s_1} \, \frac{\ud s_1}{2i\pi}
    \end{equation}
    for any $c_1 < 1$;
    \item[(ii)] If $X_t\sim TS^+(\gamma_+,\lambda_+,\alpha+)$, then its density $g^+(x,t)$ admits the Mellin-Barnes representation:
    \begin{equation}\label{density_TS+}
        g^+(x,t) =  -\frac{e^{ \lambda+^{\alpha_+} \mu_+ t \, - \, \lambda_+ x}}{\alpha_+ x } \int\limits_{c_2-i\infty}^{c_2+i\infty} \, \frac{\Gamma(1-s_2)}{\Gamma(1-\frac{s_2}{\alpha_+})} \, \left( \frac{-x}{(-\mu_+t)^{\frac{1}{\alpha_+}}}  \right)^{s_2} \, \frac{\ud s_2}{2i\pi}
    \end{equation}
    for any $c_2 < 1$.
\end{itemize}
\end{lemma}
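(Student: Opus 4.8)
The plan is to reduce both cases to Lemma~\ref{lemma:density} by exploiting the fact that a one-sided tempered stable process is nothing but an exponentially tilted (Esscher-shifted) one-sided stable process: the tempering factor $e^{-\lambda_\pm|x|}$ appearing in the L\'evy measure \eqref{nu_Temperedstable} is precisely what a shift of the stable law by $e^{\lambda_\pm x}$ produces. I therefore expect the tempered density to equal the corresponding stable density times an explicit exponential prefactor, after which the Mellin-Barnes form \eqref{density_FMLS} can be inserted verbatim.

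For part (i), I would start from the one-sided Laplace exponent obtained from \eqref{Laplace_exponent_TS} by retaining only its negative part, namely $\phi^-(p) = \gamma_-\Gamma(-\alpha_-)(-\lambda_-^{\alpha_-}+(\lambda_-+p)^{\alpha_-}) = \mu_-\lambda_-^{\alpha_-}-\mu_-(\lambda_-+p)^{\alpha_-}$ with $\mu_-=-\gamma_-\Gamma(-\alpha_-)$, valid on $\mathrm{Re}(p)>-\lambda_-$. Writing $g^-(x,t)$ as the Bromwich inversion of the moment generating function $e^{t\phi^-(p)}$ along a line $\mathrm{Re}(p)=c_p>0$, exactly as in the proof of Lemma~\ref{lemma:density}, I would then apply the shift $q:=\lambda_-+p$. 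Since $e^{-px}=e^{\lambda_- x}e^{-qx}$, this pulls out the constant $e^{t\mu_-\lambda_-^{\alpha_-}}$ together with the factor $e^{\lambda_- x}$; the new contour now sits on $\mathrm{Re}(q)=\lambda_-+c_p>0$, and the remaining integral $\frac{1}{2i\pi}\int e^{-qx}e^{-\mu_- t q^{\alpha_-}}\,\ud q$ is exactly the stable density $g_{\alpha_-}(x,t)$ of Lemma~\ref{lemma:density} with $\mu\to\mu_-$. Substituting its representation \eqref{density_FMLS} yields \eqref{density_TS-}.

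For part (ii), the cleanest route is the mirror symmetry: if $X_t\sim TS^-(\gamma,\lambda,\alpha)$ then $-X_t\sim TS^+(\gamma,\lambda,\alpha)$, so that $g^+(x,t)=g^-(-x,t)$. Replacing $x$ by $-x$ in the formula \eqref{density_TS-} just established produces the sign flip $\frac{1}{-x}=-\frac{1}{x}$, turns $e^{\lambda_- x}$ into $e^{-\lambda_+ x}$, and sends the argument to $-x/(-\mu_+ t)^{1/\alpha_+}$, which is precisely \eqref{density_TS+}. Equivalently, one may redo the Bromwich inversion directly on $\phi^+(p)=\mu_+\lambda_+^{\alpha_+}-\mu_+(\lambda_+-p)^{\alpha_+}$ with the substitution $q:=\lambda_+-p$; here the change of variable reverses the orientation of the contour, and that orientation reversal is exactly the source of the overall minus sign in \eqref{density_TS+}.

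I expect the only genuine subtleties to be bookkeeping rather than substance. The hard part is verifying that the shifted contour lands in the half-plane $\mathrm{Re}(q)>0$, where the stable inversion of Lemma~\ref{lemma:density} is valid and where the principal branch of $q^{\alpha_\pm}$ coincides with the branch implicit in $(\lambda_\pm\pm p)^{\alpha_\pm}$, and tracking the sign produced by the contour reversal in part (ii). Both are controlled by the domains of convergence $\mathrm{Re}(p)>-\lambda_-$ and $\mathrm{Re}(p)<\lambda_+$ of the respective one-sided moment generating functions, which guarantee that after the shift the contour remains strictly to the right of the origin.
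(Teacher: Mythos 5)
Your argument for (i) is exactly the paper's proof: Laplace (Bromwich) inversion of $e^{t\phi^-(p)}$, the shift $q=\lambda_-+p$ (what the paper calls the frequency-shifting property of the Laplace transform) to factor out $e^{\lambda_-^{\alpha_-}\mu_- t+\lambda_- x}$ and identify the remaining integral with the stable density of Lemma~\ref{lemma:density}, followed by substitution of \eqref{density_FMLS}. The paper dispatches (ii) with ``a similar approach can be used''; your mirror-symmetry argument $g^+(x,t)=g^-(-x,t)$ is a correct and arguably cleaner way to finish, so overall this is essentially the same proof. One small correction to your ``equivalently'' aside on (ii): after $q=\lambda_+-p$ the orientation reversal cancels against $dp=-dq$, so the overall minus sign in \eqref{density_TS+} does \emph{not} come from the contour reversal; it comes from the prefactor $1/x\mapsto 1/(-x)$ when the stable Mellin--Barnes representation is evaluated at $-x$ --- which is exactly how your primary symmetry argument (correctly) produces it.
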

\begin{proof}
It follows from \eqref{Laplace_exponent_TS} and from the Laplace inversion formula that:
\begin{equation}
    g^-(x,t) \, = \,  e^{ \lambda_-^{\alpha_-} \mu_- t }   \, \int\limits_{c_p-i\infty}^{c_p+i\infty} e^{-px} e^{-\mu_- t (\lambda_- + p)^{\alpha_-}} \, \frac{\ud p}{2i\pi}
\end{equation}
for $c_p>0$. From the frequency shifting property of the Laplace transform, we can write:
\begin{equation}
    g^-(x,t) \, = \,  e^{ \lambda_-^{\alpha_-} \mu_- t } \, e^{\lambda_- x } \, g_{\alpha_-}(x,t)
\end{equation}
where $g_{\alpha_-}(x,t)$ is the stable density \eqref{density_FMLS}, and (i) is proved. A similar approach can be used to prove (ii).\qed
\end{proof}

\subsection{Option pricing for negative tempered stable processes}
\label{subsec:pricing_tempered_densities}
Let $\alpha_-\in (1,2)$ and $X_t\sim TS^-(\gamma_-,\lambda_-,\alpha_-)$; from definition~\eqref{mu_def} and the Laplace exponent~\eqref{Laplace_exponent_TS}, the martingale adjustment reads:
\begin{equation}\label{mu_TS-}
    \mu \, = \, \left( (\lambda_- + 1)^{\alpha_-} - \lambda_-^{\alpha_-} \right) \mu_-
\end{equation}
where $\mu_-=-\gamma_-\Gamma(-\alpha_-)$ corresponds to the FMLS martingale adjustment \eqref{Laplace exponent}, and as expected $\mu\rightarrow\mu_-$ when $\lambda_-\rightarrow 0$. From the pricing formula \eqref{Risk-neutral_2} and using the notation \eqref{Mellin_G} and \eqref{Mellin_K}, the value at time $t$ of an option with maturity $T$ and payoff $\mathcal{P}(S_T,\underline{K})$ is equal to:
\begin{equation}\label{Risk-neutral:convolution_TS-}
    C_{\alpha_-,\lambda_-} (S,\underline{K},r,\mu_-,\tau)  =  e^{-(r - \lambda_-^{\alpha_-}\mu_- )\tau}  \int\limits_{c_1-i\infty}^{c_1+i\infty}  K_{\lambda_-}^*(s_1) G_{\alpha_-}^*(s_1) \, \left( -\mu_-\tau \right)^{-\frac{s_1}{\alpha_-}}  \frac{\ud s_1}{2i\pi}
\end{equation}
where we have defined
\begin{equation}\label{Mellin_K-}
    K_{\lambda_-}^*(s_1) \, := \, \int\limits_{-\infty}^{+\infty} \, e^{\lambda_- x} \, \mathcal{P} \left( S e^{(r+\mu)\tau +x} , \underline{K}  \right) x^{s_1-1} \, \ud x
    ,
\end{equation}
and where $\mu$ is given by \eqref{mu_TS-}. The $K_{\lambda_-}^*(s_1)$ function can be expressed in terms of the $K^*(s_1)$ function \eqref{Mellin_K} by introducing a Mellin-Barnes representation for the exponential term:
\begin{equation}
    K_{\lambda_-}^*(s_1) \, = \, \int\limits_{c_3-i\infty}^{c_3+i\infty} (-1)^{-s_3} \lambda_-^{-s_3} \Gamma(s_3) K^*(s_1-s_3) \frac{\ud s_3}{2i\pi}
\end{equation}
for $c_3>0$, and therefore, replacing in \eqref{Risk-neutral:convolution_TS-}, we obtain:

\begin{proposition}[Factorization]
\label{Prop:convolution_TS}
If $X_t\sim TS^-(\gamma_-,\lambda_-,\alpha_-)$ and if $\alpha_-\in (1,2)$, then the value at time $t$ of an option with maturity $T$ and payoff $\mathcal{P}(S_T,\underline{K})$ is equal to
\begin{multline}\label{Risk-neutral:convolution_TS-2}
    C_{\alpha_-,\lambda_-} (S,\underline{K},r,\mu_-,\tau) \, = \, e^{-(r - \lambda_-^{\alpha_-}\mu_- )\tau} \, \times \\
      \int\limits_{c_1-i\infty}^{c_1+i\infty} \int\limits_{c_3-i\infty}^{c_3+i\infty}
     (-1)^{-s_3} \lambda_-^{-s_3} \Gamma(s_3) K^*(s_1-s_3) G_{\alpha_-}^*(s_1) \, \left( -\mu_-\tau \right)^{-\frac{s_1}{\alpha_-}}  \frac{\ud s_1 \ud s_3}{(2i\pi)^2}
     .
\end{multline}
\end{proposition}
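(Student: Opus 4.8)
The plan is to reduce the tempered factorization to the stable one of Proposition~\ref{proposition:convolution} by peeling off the exponential tempering factor supplied by Lemma~\ref{lemma:density_TS}. First I would insert the density decomposition \eqref{density_TS-}, i.e. $g^-(x,\tau) = e^{\lambda_-^{\alpha_-}\mu_-\tau}\,e^{\lambda_- x}\,g_{\alpha_-}(x,\tau)$, into the general risk-neutral integral \eqref{Risk-neutral_2}. The purely time-dependent prefactor $e^{\lambda_-^{\alpha_-}\mu_-\tau}$ then combines with the discount $e^{-r\tau}$ into the shifted discount $e^{-(r-\lambda_-^{\alpha_-}\mu_-)\tau}$, while the factor $e^{\lambda_- x}$ stays inside the $x$-integral, attached to the payoff. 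Substituting the Mellin--Barnes representation \eqref{density_FMLS} of the stable density $g_{\alpha_-}$ and interchanging the $x$- and $s_1$-integrations isolates the symbol $G_{\alpha_-}^*(s_1)$ of \eqref{Mellin_G}, the self-similar scaling $(-\mu_-\tau)^{-s_1/\alpha_-}$, and the tempered Mellin symbol $K_{\lambda_-}^*(s_1)$ of \eqref{Mellin_K-}; this is precisely the single-integral factorization \eqref{Risk-neutral:convolution_TS-}. I note that the payoff inside $K_{\lambda_-}^*$ carries the full tempered drift $\mu$ of \eqref{mu_TS-}, whereas the scaling retains the bare stable coefficient $\mu_-$ inherited from $g_{\alpha_-}$.

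The second and decisive step is to express $K_{\lambda_-}^*$ through the untempered symbol $K^*$ of \eqref{Mellin_K}. For this I would use the Cahen--Mellin representation of the exponential, $e^{\lambda_- x} = \tfrac{1}{2i\pi}\int_{c_3-i\infty}^{c_3+i\infty}(-1)^{-s_3}\lambda_-^{-s_3}\Gamma(s_3)\,x^{-s_3}\,\ud s_3$ valid for $c_3>0$, plug it into the definition \eqref{Mellin_K-}, and swap the $x$- and $s_3$-integrations. The inner $x$-integral then collapses to $\int_{-\infty}^{+\infty}\mathcal{P}(Se^{(r+\mu)\tau+x},\underline{K})\,x^{(s_1-s_3)-1}\,\ud x = K^*(s_1-s_3)$, producing the one-dimensional Mellin convolution $K_{\lambda_-}^*(s_1) = \tfrac{1}{2i\pi}\int_{c_3-i\infty}^{c_3+i\infty}(-1)^{-s_3}\lambda_-^{-s_3}\Gamma(s_3)K^*(s_1-s_3)\,\ud s_3$. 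Inserting this back into \eqref{Risk-neutral:convolution_TS-} and merging the two line integrals yields the claimed double representation \eqref{Risk-neutral:convolution_TS-2}.

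The hard part will be justifying the two interchanges of integration and the mutual compatibility of the contours, rather than any algebra. For the $s_1$-swap, the vertical-line decay of $G_{\alpha_-}^*$ coming from the Gamma asymptotics behind \eqref{Stirling} (here $\delta = \tfrac1{\alpha_-}-1<0$ since $\alpha_-\in(1,2)$), together with absolute convergence of the $x$-integral on the payoff support, secures Fubini. For the $s_3$-swap one must additionally check that the shifted argument $s_1-s_3$ remains inside the fundamental strip $(c_-,c_+)$ on which $K^*$ of \eqref{Mellin_K} is defined, i.e. that $c_1$ and $c_3$ can be chosen simultaneously with $c_3>0$ and $c_1-c_3\in(c_-,c_+)$; the exponential decay of $\Gamma(s_3)$ along the line $\mathrm{Re}(s_3)=c_3$, balanced against the Stirling-controlled growth of $K^*(s_1-s_3)$, then guarantees absolute convergence of the resulting double integral and legitimizes the swaps. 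This is exactly the two-dimensional analogue of \eqref{Stirling} already invoked in the proof of Proposition~\ref{prop:A/N}.
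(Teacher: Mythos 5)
Your proposal is correct and follows essentially the same route as the paper: insert the tempered density decomposition $g^-(x,\tau)=e^{\lambda_-^{\alpha_-}\mu_-\tau}e^{\lambda_- x}g_{\alpha_-}(x,\tau)$ from Lemma~\ref{lemma:density_TS} into \eqref{Risk-neutral_2} to obtain the single-integral factorization \eqref{Risk-neutral:convolution_TS-} with the tempered symbol $K_{\lambda_-}^*$, then use the Cahen--Mellin representation of $e^{\lambda_- x}$ to write $K_{\lambda_-}^*(s_1)$ as a Mellin convolution of $\Gamma(s_3)$ against $K^*(s_1-s_3)$ and substitute back. Your additional care with the Fubini interchanges, the contour compatibility $c_3>0$, $c_1-c_3\in(c_-,c_+)$, and the distinction between the tempered drift $\mu$ in the payoff and the bare $\mu_-$ in the scaling factor only makes explicit what the paper leaves implicit.
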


\paragraph{Example: digital power option (cash-or-nothing)} In that case, we know from \eqref{Mellin_K_DigitalCN} that:
\begin{equation}
    K^*(s_1-s_3) \, = \, - \frac{(-k_u- \rho_- \mu_- \tau)^{s_1-s_3}}{s_1-s_3}
\end{equation}
where $\rho_-:= \left( (\lambda_- + 1)^{\alpha_-} - \lambda_-^{\alpha_-} \right)$, and therefore it follows from \eqref{Risk-neutral:convolution_TS-2} that the digital cash-or-nothing call reads:
\begin{multline}\label{Call_C/N_TS-1}
    C_{\alpha_-,\lambda_-}^{(C/N)} (S,K,r,\mu_-,\tau) \, = \, \frac{1}{\alpha_-}e^{-(r - \lambda_-^{\alpha_-}\mu_- )\tau} \,    \int\limits_{c_1-i\infty}^{c_1+i\infty} \int\limits_{c_3-i\infty}^{c_3+i\infty} \, (-1)^{1-s_3} \, \times \\
    \frac{\Gamma(1-s_1)\Gamma(s_3)}{(s_1-s_3)\Gamma(1-\frac{s_1}{\alpha_-})}
    \lambda^{-s_3}
    (-k_u- \rho_- \mu_- \tau)^{s_1-s_3}
    \left( -\mu_-\tau \right)^{-\frac{s_1}{\alpha_-}}  \frac{\ud s_1 \ud s_3}{(2i\pi)^2}
    .
\end{multline}
The double integral \eqref{Call_C/N_TS-1} has a simple pole in $(0,0)$ with residue $1$, and a series of simple poles in $(1+n,m)$, $n,m\in\mathbb{N}$ induced by the singularities of the $\Gamma(1-s_1)$ and $\Gamma(s_3)$ functions. Summing all these residues yields:
\begin{multline}\label{Call_C/N_TS-2}
    C_{\alpha_-,\lambda_-}^{(C/N)} (S,K,r,\mu_-,\tau) \, = \, \frac{e^{-(r - \lambda_-^{\alpha_-}\mu_- )\tau}}{\alpha_-} \left[ 1 + \right. \\
    \left. \sum\limits_{\substack{n = 0 \\ m = 0}}^{\infty} \frac{(-\lambda_-)^m}{(1+n+m)n!m!\Gamma(1-\frac{1+n}{\alpha_-})} (k_u + \rho_- \mu_- \tau)^{1+n+m} \left( -\mu_-\tau \right)^{-\frac{1+n}{\alpha_-}} \right]
    .
\end{multline}
Note that when $\lambda_- = 0$, only the terms for $m=0$ survive and \eqref{Call_C/N_TS-2} degenerates into the $\alpha$-stable price \eqref{Call_C/N}, as expected. In the ATM forward case ($k_u$=0), the first few terms of the series \eqref{Call_C/N_TS-2} read:
\begin{equation}
 \frac{e^{-(r - \lambda_-^{\alpha_-}\mu_- )\tau}}{\alpha_-} \left[ 1 - \frac{\rho_-}{\Gamma(1-\frac{1}{\alpha_-})}(-\mu_-\tau)^{1-\frac{1}{\alpha_-}} + O \left( (-\mu_-\tau)^{2-\frac{2}{\alpha_-}} \right) \right]
\end{equation}
and can be Taylor-expanded for small $\lambda_-$:
\begin{equation}\label{Call_C/N_TS-approx}
    \frac{e^{-r\tau}}{\alpha_-} \left[  1 - \frac{(-\mu_-\tau)^{1-\frac{1}{\alpha_-}}}{\Gamma(1-\frac{1}{\alpha_-})} - \alpha_- \frac{(-\mu_-\tau)^{1-\frac{1}{\alpha_-}}}{\Gamma(1-\frac{1}{\alpha_-})} \lambda_- + O \left( \lambda_-^{\alpha_-}  \right) \right]
    .
\end{equation}
In the linear approximation \eqref{Call_C/N_TS-approx}, the intercept is the stable price, while the slope is governed by the negative left tail parameter $-\alpha_-$; the tempered stable price is therefore lower than the stable price (which is due to the tempering of the heavy tail), and the difference increases when $\alpha$ grows. This situation is displayed on fig.~\ref{fig:2} for $\alpha=1.7$, $K=4000$, $r=1\%$, $\sigma=20\%$, $\tau=2Y$, the series \eqref{Call_C/N} and \eqref{Call_C/N_TS-2} being truncated to $n_{max}=m_{max}=10$.

 \begin{figure}
  \centering
  \includegraphics[scale=0.25]{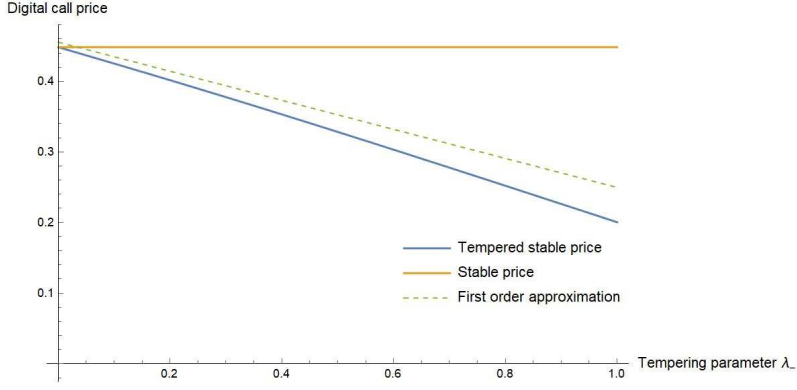}
 \caption{At the money stable \eqref{Call_C/N} and tempered stable \eqref{Call_C/N_TS-2} prices, and linear approximation \eqref{Call_C/N_TS-approx}; the tempered stable price intercepts the stable price when $\lambda_-=0$.}
 \label{fig:2}       
 \end{figure}

\section{Conclusions and future work}
\label{sec:conclusion}

In this article, we have derived generic representations in the Mellin space for path-independent options with arbitrary payoff, in the setup of exponential L\'evy models driven by spectrally negative stable or tempered stable processes. These representations have allowed us to obtain simple series expansions for the price of options with an exotic power-related payoff (Power Digital, Log or Gap Power options, Capped Power European options), by means of residue summation in $\mathbb{C}$ or $\mathbb{C}^2$. These series contain only simple terms and converge very fast, in particular when calls are in-the-money and for longer maturities; they can be very easily used for practical evaluation without requiring any help from numerical schemes.

Future work will include the investigation of path-dependent options, like Barrier or Lookback options; spectrally negative $\alpha$-stable processes are particularly interesting in this context, because the law of the supremum on a period of time is known to be \cite{Bingham73}:
\begin{equation}
    \mathbb{P} \left[ \underset{t\in[0,T]}{\mathrm{sup}} X_t \geq x \right] \, = \, \alpha \mathbb{P} \left[  X_T \geq x \right]
\end{equation}
which generalizes the reflection principle for the Wiener process ($\alpha=2$). Regarding path-independent instruments, the extension of the Mellin residue technique to two-sided L\'evy processes is currently in progress, with a particular focus on the Variance Gamma and the Normal Inverse Gamma (NIG) processes; the technique is very well adapted to these models too, because, like in the spectrally negative case, their density functions can be expressed under the form of Mellin integrals. Ongoing work also includes the extension of the pricing formulas obtained for the one sided tempered stable process in the case of digital (cash-or-nothing) options, to the more general case of European options.


%

%


\begin{acknowledgements}
The author thanks an anonymous reviewer for his/her insightful comments and suggestions.
\end{acknowledgements}

%
\section*{Conflict of interest}

The authors declares that he has no conflict of interest.



\end{document}